\def\bSig\mathbf{\Sigma}
\newcommand\independent{\protect\mathpalette{\protect\independenT}{\perp}}
\def\independenT#1#2{\mathrel{\rlap{$#1#2$}\mkern2mu{#1#2}}}
\newcommand{\inp}{\stackrel{p}{\longrightarrow}}
\newcommand{\inD}{\stackrel{D}{\longrightarrow}}
\newcommand{\bW}{\mathcal{W}}
\newcommand{\bd}{\mathbf{d}}
\newcommand{\bX}{\boldsymbol{X}}
\newcommand{\bA}{\boldsymbol{A}}
\newcommand{\bE}{\boldsymbol{E}}
\newcommand{\bM}{\boldsymbol{M}}
\newcommand{\ba}{\boldsymbol{a}}
\newcommand{\bH}{\boldsymbol{H}}
\newcommand{\bx}{\boldsymbol{x}}
\newcommand{\bh}{\boldsymbol{h}}
\newcommand{\bI}{\boldsymbol{I}}
\newcommand{\Xbar}{\overline{\bX}}
\newcommand{\xbar}{\overline{\bx}}
\newcommand{\abar}{\overline{\ba}}
\newcommand{\Abar}{\overline{\bA}}
\newcommand{\bbeta}{\boldsymbol{\beta}}
\newcommand{\hatbeta}{\widehat{\bbeta}}
\newcommand{\hatmu}{\widehat{\mu}}
\newcommand{\btheta}{\boldsymbol{\theta}}
\newcommand{\hattheta}{\widehat{\btheta}}
\newcommand{\hatnu}{\widehat{\nu}}
\newcommand{\hatsigma}{\widehat{\sigma}}
\newcommand{\tiltheta}{\widetilde{\btheta}}
\newcommand{\tilbeta}{\widetilde{\bbeta}}
\newcommand{\bphi}{\boldsymbol{\phi}}
\newcommand{\bgamma}{\boldsymbol{\gamma}}
\newcommand{\hatSig}{\widehat{\boldsymbol{\Sigma}}}
\newcommand{\tildeV}{\widetilde{V}}
\newcommand{\dbar}{\overline{d}}
\newcommand{\calV}{\mathcal{V}}
\newcommand{\calA}{\mathcal{A}}
\newcommand{\calP}{\mathcal{P}}
\newcommand{\hatP}{\widehat{\calP}}
\newcommand{\calH}{\mathcal{H}}
\newcommand{\calN}{\mathcal{N}}
\newcommand{\Data}{\mathfrak{D}}
\newcommand{\sumiN}{\sum^N_{i=1}}
\newcommand{\hatQ}{\widehat{Q}}
\newtheorem{thm}{Theorem}
\title[Thompson Sampling for SMARTs]{Adaptive Randomization Methods
  for Sequential Multiple Assignment Randomized Trials (SMARTs) via
  Thompson Sampling}
\author{Peter Norwood$^{1,*}$\email{p.norwood@quantumleaphealth.org}, 
Marie Davidian$^{2**}$\email{davidian@ncsu.edu}, and 
Eric Laber$^{3***}$\email{eric.laber@duke.edu} \\
$^{1}$Quantum Leap Healthcare Collaborative, San Francisco, California, U.S.A.\\
$^{2}$Department of Statistics, 
North Carolina State University, Raleigh, North Carolina, U.S.A. \\
$^{3}$Department of Statistical Science, 
Duke University, Durham, North Carolina, U.S.A.}
\begin{document}

% \date{{\it Received October} 2007. {\it Revised February} 2008.  {\it
% Accepted March} 2008.}

%\doi{10.1111/j.1541-0420.2005.00454.x}

%  This label and the label ``lastpage'' are used by the \pagerange
%  command above to give the page range for the article.  You may have 
%  to process the document twice to get this to match up with what you 
%  expect.  When using the referee option, this will not count the pages
%  with tables and figures.  

%\label{firstpage}

%  put the summary for your paper here - should be <= 225 words

\begin{abstract}
  Response-adaptive randomization (RAR) has been studied extensively
  in conventional, single-stage clinical trials, where it has been
  shown to yield ethical and statistical benefits, especially in
  trials with many treatment arms.  However, RAR and its potential
  benefits are understudied in sequential multiple assignment
  randomized trials (SMARTs), which are the gold-standard trial design
  for evaluation of multi-stage treatment regimes.  We propose a suite
  of RAR algorithms for SMARTs based on Thompson Sampling (TS), a
  widely used RAR method in single-stage trials in which treatment
  randomization probabilities are aligned with the estimated
  probability that the treatment is optimal.  We focus on two common
  objectives in SMARTs: (i) comparison of the regimes embedded in the
  trial, and (ii) estimation of an optimal embedded regime.  We
  develop valid post-study inferential procedures for treatment
  regimes under the proposed algorithms.  This is nontrivial, as (even
  in single-stage settings) RAR can lead to nonnormal limiting
  distributions of estimators. Our algorithms are the first for RAR in
  multi-stage trials that account for nonregularity in the estimand.
  Empirical studies based on real-world SMARTs show that TS can
  improve in-trial subject outcomes without sacrificing efficiency for
  post-trial comparisons.  \vspace*{0.3in}
\end{abstract}

%  Please place your key words in alphabetical order, separated
%  by semicolons, with the first letter of the first word capitalized,
%  and a period at the end of the list.
%

\begin{keywords}
Inverse probability weighted estimator; Precision medicine; Response-adaptive randomization, Treatment regime 
\end{keywords}

%  As usual, the \maketitle command creates the title and author/affiliations
%  display 

\maketitle

\section{Introduction}
\label{s:intro}

%% Better to use "patient" when talking about a regime, as regimes are
%% meant to be applied to patients, not subjects in a clinical trial
%% and have nothing to do with a clinical trial

A treatment regime, also known as an adaptive treatment strategy, is a
sequence of decision rules, one for each key decision point in a
patient's disease progression, that maps accumulated information on a
patient to a recommended treatment. \citep[][]{tsiatis2020dynamic}.
Sequential multiple assignment randomized trials (SMARTs) are the gold
standard for the study of treatment regimes \citep{lavori_dawson,murphy_2005} and have been applied successfully
in a range of areas, including cancer, addiction, HIV/prevention, and
education \citep[e.g.,][]{kidwell_2014,
% wallace2016smart,nahum_2017,
bigirumurame2022sequential,lorenzoni2023use}.  A SMART involves
multiple stages of randomization, each stage corresponding to a
decision point, where the sets of possible treatments may depend on
baseline and interim information.

% SMARTs have been conducted in a range of areas
% \citep[e.g.,][]{kidwell_2014,nahum_2017}.
%% There are too many references, so we probably should be looking to
%% delete references

%%%%% Italics are not allowed in Biometrics, so removed them from
%%%%% embedded regime in the next paragraph

Figure~\ref{fig:example_smart} depicts a two-stage SMART
to evaluate behavioral intervention strategies for cancer pain
management \citep{tammy_smart_2023}.
% \citep{smart_2017}.
Subjects were randomized at baseline with equal probability to one of
two first-stage interventions: Pain Coping Skills Training (PCST) with
five sessions (PCST-Full), or one session (PCST-Brief).  At the end of
the first stage, subjects were classified as responders if they
experienced a 30\% reduction in their pain score from baseline and as
nonresponders otherwise.  Subjects were then assigned with equal
probability to one of two second-stage interventions depending on
their first-stage intervention and response status.  This design, like any fixed
(nonadaptive) SMART design, can be represented as a single-stage trial
in which subjects are randomized at baseline among a set of fixed
regimes known as the SMART's embedded regimes \citep[][Chapter
9]{tsiatis2020dynamic}.  The cancer pain SMART has eight embedded
regimes determined by the stage 1 treatment, stage 2 treatment for
responders, and stage 2 treatment for nonresponders, e.g., one
embedded regime assigns PCST-Full initially followed by no further
treatment if the subject responds and another round of PCST-Full
otherwise.  A key goal was to evaluate the embedded regimes on the
basis of mean percent reduction in pain from baseline at the end of
stage 2 and to identify a regime that yields the greatest
reduction.
\begin{figure}[t]
\centering
\includegraphics[width=12cm]{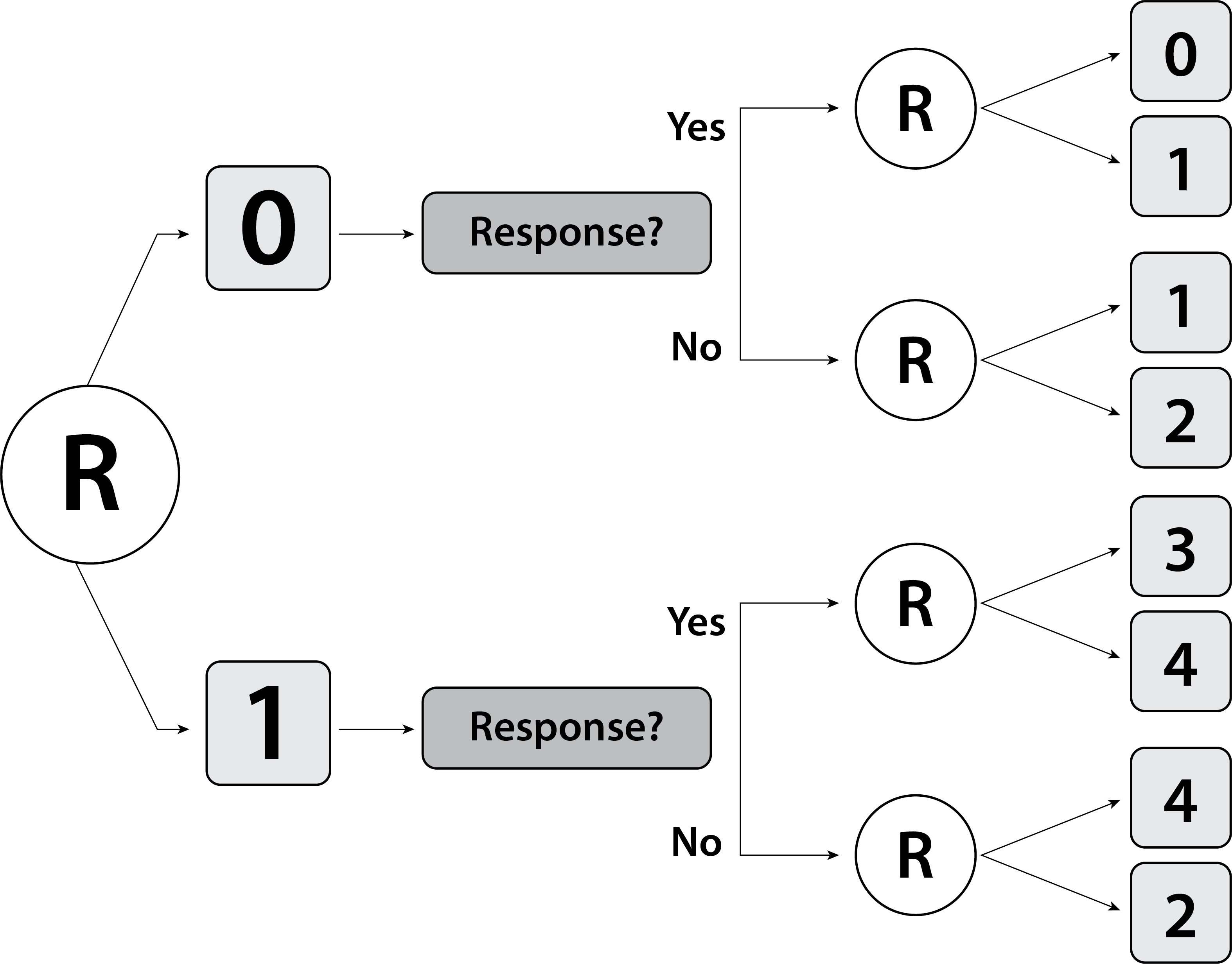}
\caption{\label{fig:example_smart} Schematic depicting the design of
  the SMART for evaluation of behavioral interventions for cancer pain
  management.  The eight embedded regimes implied by the design are of
  the form ``Give $a$ initially; if response, give $b$, otherwise if
  nonresponse give $c$,'' where regimes 1,\ldots, 8 correspond to
  $(a,b,c) =$ (0,0,1), (0,0,2), (0,1,2), (0,1,1), (1,3,4), (1,3,2),
  (1,4,2), (1,4,4), respectively.}
\end{figure}

As in the preceding example, most SMARTs use fixed randomization
probabilities at each stage.  Although fixed and balanced
randomization in SMARTs yields high power for comparing treatments and
treatment regimes \citep[][]{murphy_2005}, ethical and pragmatic
considerations suggest that updating randomization probabilities based
on accruing information can improve outcomes for trial subjects,
increase enrollment, and decrease dropout \citep{FDA2019}.
Response-adaptive randomization (RAR) uses accumulating information to
skew randomization probabilities toward promising treatments and has
long been used in single-stage randomized clinical trials 
\citep[RCTs,][]{kim2011battle,berry2015brave,ISPY2}.
%and micro-randomized trials \citep[][]{tomkins2020intelligentpooling}.
% The I-SPY 2 platform RCT in breast cancer \citep{ISPY2} is a notable
% example where RAR has been used to enhance benefit to participants
% and identify promising agents.
There is a lengthy literature on RAR methods; reviews for
single-stage RCTs include \citet{hu2006theory},
\citet{berry2010bayesian}, and \citet{atkinson2013randomised}.  RAR
procedures are typically formalized as a multi-arm bandit in which
each treatment option is a bandit ``arm''
\citep[][]{berry1985bandit,villar_bowden_wason_2015,lattimore2020bandit};
strong theoretical and empirical support for bandits has grown from
their use in applications such as mHealth and
telehealth \citep[][]{tewari2017ads,liu2023microrandomized}.
% as well as recommender systems and advertising
% \citep[][]{li2010contextual,sen2021top}.
In single-stage RCTs, the advantages of RAR are most pronounced in
trials with a large number of treatments because RAR algorithms
oversample more favorable treatments and undersample
less favorable ones \citep{berry_nature}.  RAR also presents
challenges, e.g., early evidence can lead the randomization to become
``stuck'' favoring suboptimal treatments \citep*{ThallFoxChapter}, and
there have been spirited debates about the potential pitfalls of RAR
\citep[][]{viele2020comparison,proschan2020resist,villar2021temptation}.

As the number of treatments and regimes evaluated in a SMART can be
large, integration of RAR into SMARTs could have significant benefits,
yet development is limited.  \citet*{cheung_chakraborty_davidson_2014}
propose SMART-AR, an RAR method based on Q-learning
\citep[e.g.,][Section~5.7.1]{tsiatis2020dynamic} that adapts
randomization probabilities so that treatments with large estimated
Q-functions are more likely to be selected.  However, this approach
does not account for uncertainty in the estimated Q-functions nor the
potential information gain associated with each treatment, so the
resulting design may be inefficient.  \citet*{wang_wu_wahed_2021}
propose RA-SMART, an RAR method for two-stage SMARTs % in which the same
% treatments are offered
with the same treatments at each stage.  The method does not
incorporate delayed effects, i.e., does not account for treatment
effects at stage 2 that depend on stage 1 treatment.  Because the
possibility of interactions among treatments at different stages is
key to the development of treatment regimes, RAR schemes that
acknowledge delayed effects are desirable.

In this article, we propose RAR approaches for SMARTs based on
Thompson sampling \citep[TS,][]{thompson1933likelihood}, a popular bandit algorithm also known as
probability matching
\citep[][]{agrawal2012analysis,thompson_tutorial}. The idea underlying
TS is that a treatment's randomization probability should be based on
confidence that the treatment is optimal.  Classical TS measures this
confidence with a posterior probability; aligned with standard SMART
methodology, we adopt a frequentist perspective in which confidence is
assessed using a confidence distribution \citep[][]{xie2013confidence}.
The proposed methods are applicable to SMARTs % with finite numbers of
% stages 
used to evaluate a fixed set of treatment regimes and/or
identify an optimal regime and, by construction, account for
delayed effects.
% We consider evaluation of a fixed set of treatment regimes and/or
% the identification of an optimal regime.  The proposed methods are
% applicable to SMARTs with finite numbers of stages and, by
% construction, account for delayed effects.

In Section~\ref{s:background}, we present the statistical framework
and a variant of TS for SMARTs, and we use the latter in
Section~\ref{s:methods} to construct the proposed RAR approaches.  One
class of methods randomizes subjects ``up-front'' to entire
embedded regimes, while the other sequentially randomizes subjects at each decision
point.  We propose estimators for the marginal mean outcome
under a regime in Section~\ref{s:methods} and argue in
Section~\ref{s:theory} that they are consistent and asymptotically
normal; this is nontrivial, as adaptive randomization can lead to
nonnormal limits for plug-in estimators
\citep{Hadad2021,ZhangMest2021}.  % We present
% simulations demonstrating the performance of the methods in
Simulations demonstrating performance are reported in Section~\ref{s:sims}.
  %and present a case study based on the cancer pain management SMART
  %in Section~\ref{s:example}.
  %%%%  need to talk about this- maybe we don't have space

%\vspace*{-0.25in}
  
\section{Background and preliminaries}
\label{s:background}

\vspace*{-0.1in}

\subsection{Notation and assumptions}
\label{ss:SMART}

\vspace*{-0.05in}

We first review SMARTs with fixed (i.e., nonadaptive) randomization.
Consider a SMART with $K$ stages, where, at stage $k=1,\ldots,K$,
$\mathcal{A}_k$ is the (finite) set of available treatment options.
Let $A_k \in \mathcal{A}_k$ be the treatment assigned at stage $k$,
and let $\overline{A}_k = (A_1,\ldots, A_k) \in \calA_1 \times \cdots \times \calA_k$
denote all treatments
given through stage $k$. 
Let $\bX_1$ be a set of baseline subject variables collected prior to
administration of stage 1 treatment, and let
$\bX_k$ % \in \mathcal{R}^{p_{\bX_k}}
comprise variables collected between stages $k-1$ and $k$,
$k=2,\ldots,K$. Define $\overline{\bX}_k = (\bX_1,\dots,\bX_k)$,
$k=1,\ldots,K$; and  let $\bH_1=\bX_1$ and
$\bH_k = (\bX_1,A_1,\bX_2,A_2,\dots,A_{k-1},\bX_k) =
(\Xbar_k,\Abar_{k-1})$, $k=2,\ldots,K$, denote the information at the
time $A_k$ is assigned, with $\mathcal{H}_k$ denoting the domain of
$\bH_k$, $k=1,\ldots, K$.  Let
$Y$ be the real-valued outcome of interest, which is measured or
constructed from $\overline{\bX}_{K}$ and information measured after
stage $K$ at some specified follow-up time,
coded so that larger values are more favorable.  Elements of $\bX_k$
may include current and past measures of the patient's health status,
previous measures of the outcome, and response status.  For a subject
with history $\bH_k=\bh_k$ at stage $k$,
$\Psi_k(\bh_k) \subseteq \calA_k$ is the set of feasible treatment
options in $\calA_k$, where $\Psi_k$ maps $\calH_k$ to subsets of
$\calA_k$ \citep[Section 6.2.2]{tsiatis2020dynamic}. E.g., at
stage $k=2$ of the cancer pain SMART, $\Psi_2(\bh_2) = \{1, 2\}$ for a
subject who does not respond to treatment 0 given at the first stage.  At
stage $k$, a subject is randomized to option $a_k \in \Psi_k(\bh_k)$
with probability $P(A_k = a_k | \bH_k = \bh_k)$; often,
$P(A_k=a_k|\bH_k=\bh_k) = 1/|\Psi_k(\bh_k)|$ for all $k$.

A decision rule $d_k: \mathcal{H}_k \rightarrow \mathcal{A}_k$ maps an
individual's history to a recommended treatment at stage $k$, where
$d_k(\bh_k) \in \Psi_k(\bh_k)$ for all $\bh_k \in \calH_k$.  A
treatment regime $\bd = \left( d_1,\dots,d_K \right)$ is a sequence
of decision rules, where $d_k$ is the rule for stage $k$,
$k=1,\ldots,K$.  The mean outcome that would be achieved if the
population were to receive treatment 
according to $\bd$ can be characterized in terms of
potential outcomes. % \citep[][]{rubin1978bayesian,splawa1990application}.  
For any
$\overline{\ba}_{k-1} = (a_1,\ldots, a_{k-1}) \in \mathcal{A}_{k-1}$, let
$\bX_k^*(\abar_{k-1})$ denote the potential subject information that
would accrue between stages $k-1$ and $k$, and define
$\Xbar_k^*(\abar_{k-1}) =
\{\bX_1,\bX_2^*(a_1),\ldots,\bX_k^*(\abar_{k-1})\}$, $k=2,\ldots,K$.
The potential history at stage $k\ge 2$ is thus
$\bH_k^*(\abar_{k-1}) = \{\Xbar_k^*(\abar_{k-1}), \overline{\ba}_{k-1} \}$, with %; to make notation consistent, define
$\bX_1^*(\overline{\ba}_0) = \bH_1^*(\overline{\ba}_0) = \bH_1$.  For any
$\abar_K\in\overline{\mathcal{A}}_K$, let $Y^*(\abar_K)$ be the
potential outcome that would be achieved under $\abar_K$.
For $k=2,\ldots, K$, the potential accrued information between stages $k-1$ and $k$ under regime $\bd$ is then
$\bX^*_k(\bd) = \sum_{\abar_{k-1} \in \overline{\calA}_{k-1}} \bX^*_k(\abar_{k-1})
\prod^{k-1}_{v=1} I [ d_v\{ \bH_v^*(\abar_{v-1}) \} = a_v]$, and
the potential outcome under $\bd$ is 
%\begin{align*}
$Y^*(\bd) = \sum_{\abar_K \in \overline{\calA}_K} Y^*(\abar_K)
  \prod^K_{k=1} I[ d_k\{ \bH_k^*(\abar_{k-1}) \} = a_k]$,
%\end{align*}
where $I(\cdot)$ is the indicator function.  The
mean outcome for regime $\bd$, known as the value of $\bd$, is 
$\calV(\bd) = E\{ Y^*(\bd)\}$.

Define
$\bW^* = \{\bX_2^*(a_1),\ldots,\bX_K^*(\abar_{K-1}),Y^*(\abar_K)$ for all
$\abar_K \in \calA_K\}$ to be the set of all potential outcomes. 
For a given regime $\bd$, identification of $\calV(\bd)$ is possible
from the data $(\Xbar_K,\Abar_K,Y)$ under the following assumptions,
which are discussed extensively elsewhere \citep[e.g., ][Section 6.2.4]{tsiatis2020dynamic} and which we adopt: (i) consistency,
$Y = Y^{*}(\Abar_K)$, $\bH_k = \bH_k^*(\Abar_{k-1})$, $k=1,\ldots,K$;
(ii) positivity, $P(A_k = a_k | \bH_k = \bh_k) >0$ for all
$a_k \in \Psi_k(\bh_k)$ and all $\bh_k \in \calH_k$, $k=1,\ldots,K$;
and (iii) sequential ignorability,
$\bW^* \independent A_k \, | \, \bH_k$, $k=1,\ldots,K$, where
``$\independent$'' denotes statistical independence.  We also assume
that there is no interference among individuals nor multiple versions
of a treatment.  In a SMART with nonadaptive randomization, (ii) is
true by design, and (iii) is guaranteed by randomization; with RAR,
(ii) holds if constraints are imposed on the randomization
probabilities (discussed shortly) and (iii) holds because
randomization probabilities are known features of the history.

Denote the $m$ regimes embedded in a SMART
as $\bd^1,\ldots,\bd^m$.  A common primary analysis is the comparison of
$\calV(\bd^j)$, $j=1,\ldots,m$, i.e., the mean outcomes that would be
achieved if the population were to receive treatments according to
$\bd^1,\ldots,\bd^m$.  Another common analysis is identification of an
optimal embedded regime, $\bd^{\mathrm{opt}}$,
satisfying $\calV(\bd^{\mathrm{opt}}) \geq \calV(\bd^j)$, $j=1,\ldots,m$.

%\vspace*{-0.19in}

\subsection{Subject accrual and progression processes}
\label{ss:accrual}

\vspace*{-0.0in}

Ordinarily, it is assumed that subjects enroll in a clinical trial by
a completely random process over a planned accrual period.  Under such
a process, subjects in a SMART progress through the stages
in a staggered fashion.  Thus, at any point during the trial, enrolled
subjects will have reached different stages, with only some having the
outcome $Y$ ascertained.  By reaching stage $k$, we mean that a
subject has completed the previous stages and that $A_k$ has been
assigned. E.g., in a SMART with $K=2$ stages, at any time, there may
be subjects who have received only a first-stage treatment, for whom
$(\bX_1,A_1)$ is available; subjects who have reached stage 2 but have
not completed follow up, for whom $(\bX_1,A_1,\bX_2,A_2)$ is
available; and subjects who have completed the trial, for whom
$(\bX_1,A_1,\bX_2,A_2,Y)$ is available.  %We consider RAR schemes that
%use the available data on all enrolled subjects, regardless of their
%stage in the trial, to update randomization probabilities when new or
%existing subjects require randomization.

%%  Again, got rid of the italics, as they are not allowed for Biometrics

Ideally, randomization probabilities would be updated each time a new
randomization is needed.  However, for logistical or other reasons, it
may be more feasible in practice to update the probabilities on a
pre-set schedule.  For definiteness, we consider a SMART that will
enroll $N$ subjects over $T$ weeks, with randomization probabilities
updated weekly.  We develop two randomization schemes: the first,
which we term up-front, randomizes subjects to an embedded regime at
baseline; the second, which we term sequential, randomizes subjects at
each stage.  While these two schemes are equivalent in a SMART with fixed randomization probabilities, the sequential scheme under RAR allows a subject's randomization probabilities to depend on 
information collected as they progress through the
trial. % (rather than just information collected
%prior to their enrollment).

We assume that a group of subjects of random size enrolls at each week
$t$ and, depending on the randomization scheme, up-front or
sequential, requires assignment to either an entire regime (up-front)
or a stage 1 treatment (sequential) according to probabilities that
are updated at $t$ based on the accrued data from subjects previously
enrolled at weeks $t-1, t-2,\ldots,0$.  Each of these subjects is
assigned a regime (up-front) or $A_1$ (sequential) using these
probabilities.  At each subsequent stage $k=2,\ldots,K$, we assume
that there is a delay, measured in weeks, between when $A_{k-1}$ is
assigned and when $\bX_k$ is ascertained and $A_k$ is assigned, as
well as a follow-up period after $A_K$ is assigned but before $Y$ is
recorded.  Thus, under sequential randomization, these subjects will
require randomization to $A_k$, $k = 2,\ldots,K$, at a future week
$(t+s)$, say, using probabilities based on the accrued data from
subjects previously enrolled at or before week $(t+s-1)$.

To represent the data available on subjects who are already enrolled
at any week $t$, let $\Gamma_{t} \in \{0,1\}$ be the indicator that a
subject has enrolled in the SMART and been assigned to a regime or  stage 1
treatment by week $t$ (i.e., at a week $\leq t$). For such
subjects, let $\tau \leq t$ be the week of enrollment and assignment;
$\kappa_t \in \{1,\ldots,K\}$ be the most recent stage reached by week
$t$; and $\Delta_{t} \in \{0, 1\}$ be the indicator that a subject has
completed follow up by week $t$, so that $Y$ has been observed.  The
data available on a subject at week $t$ are then
$D_{t} = \Gamma_{t}\{ 1, \tau,\kappa_{t}, \bX_1, A_1,
I(\kappa_{t}>1)\bX_2, I(\kappa_{t}>1)A_2, \ldots,
I(\kappa_{t}>K-1)\bX_K, I(\kappa_{t}>K-1)A_K, \Delta_{t}, \Delta_{t}Y
\}$; $D_t$ is null if a subject has not yet enrolled by $t$.  The
accrued data from all previously-enrolled subjects that can inform the
randomization probabilities at week $t$ are then
$\Data_{t-1} = \{ D_{t-1,i}, \forall i \mbox{ such that } \Gamma_{t-1,i}
= 1\}$.  %% Biometrics would make you define s.t. so just spell out

%\vspace*{-0.19in}

\subsection{Thompson sampling}
\label{ss:thompson}

\vspace*{-0.05in}

The central idea of TS is to map a treatment's randomization
probability to the ``belief'' that it is optimal among the available
options.  This belief is represented conventionally by the posterior
probability that a treatment is optimal in the sense that it optimizes
expected outcome \citep{thompson1933likelihood}.  For example, in the
up-front case (and similarly for a single-stage treatment), let
$\widehat{\rho}_{t}^j \in [0,1]$ be the estimated belief that
treatment regime $j=1,\ldots,m$ is optimal at week $t$ based on the
accrued data $\Data_{t-1}$ from previously-enrolled subjects.  Let
$r_{t}^j$ be regime $j$'s randomization probability for subjects
needing randomization at week $t$.  Ordinarily, $r_{t}^j$ is taken to
be a monotone function of $\widehat{\rho}_{t}^j$; a popular choice is
$r_{t}^j= (\widehat{\rho}_{t}^j)^{c_t} / \sum_{v=1}^m
(\widehat{\rho}_{t}^v)^{c_t},$ where $c_t \in [0,1]$ is a damping
constant \citep{practical_thompson}.  Smaller values of $c_t$ pull the
probabilities toward uniform randomization; higher values pull them
closer to the beliefs.  The $c_t$ can be the same for all $t$ or 
increase with $t$ to impose greater adaptation as data accumulate.
Because aggressive adaptation can lead to randomization probabilities
approaching 0 or 1 for some regimes or treatments, limiting
exploration of all options,
% which is
% undesirable if the goal is exploration of all treatments,
one may impose clipping constants, i.e., lower/upper bounds on the
probabilities \citep{zhang2021inference}.  This practice is 
consistent with the positivity assumption.

In a fully Bayesian formulation, beliefs $\widehat{\rho}_{t}^j$ are
estimated posterior probabilities based on $\Data_{t-1}$.  We propose a
frequentist analog based on the so-called confidence distribution
\citep[][]{xie2013confidence}. Here, we provide a basic overview
of the approach; details for the up-front and sequential algorithms
are in subsequent sections.  Let $\btheta$ be the vector of parameters
that, with a subject's current history, determines a subject's optimal
treatment; e.g., under up-front randomization, % among embedded regimes,
$\btheta = (\theta^1,\ldots,\theta^m)^T =
\{\calV(\bd^1),\ldots,\calV(\bd^m)\}^T$.  Let $\hatP_{\btheta,t}$ be
the estimated confidence distribution for $\btheta$ at week $t$ based
on $\Data_{t-1}$, e.g., the estimated asymptotic distribution of some
estimator $\hattheta_t$ for $\btheta$, and let
$\tiltheta^b_t = (\widetilde{\theta}^{1,b}_{t},\ldots,
\widetilde{\theta}^{m,b}_{t})^T$, $b=1,\ldots,B$, be independent draws
from $\hatP_{\btheta,t}$.  Then for $j=1,\ldots,m$,
$\widehat{\rho}_{t}^j = B^{-1} \sum_{b=1}^B I \{
\widetilde{\theta}^{j,b}_{t}=\max(\widetilde{\theta}^{1,b}_{t},\ldots,
\widetilde{\theta}^{m,b}_{t}) \}$.

At the onset of a SMART, a burn-in period of nonadaptive
randomization may be required from which to obtain an initial estimate
of $\btheta$ and associated confidence distribution to be used for the
first update of randomization probabilities.  The burn-in may be
characterized in terms of numbers of subjects who have completed each
stage or the calendar time elapsed since the start of the trial;
examples are presented in Section~\ref{s:sims}.  In what follows,
$t^*$ is the calendar time at which burn-in is complete, so that
adaptation starts at week $t = t^*+1$.

\vspace*{-0.25in}

\section{Adaptive randomization for SMARTs using Thompson sampling}
\label{s:methods}

%  change utilize to use, Biometrics style25

We now present the proposed up-front and sequential TS approaches to
RAR for SMARTs.  As above, under up-front randomization, subjects are
randomized once, at enrollment, to an embedded regime,
which they follow through all $K$ stages.  Thus, the ``path'' that a
subject will take through the trial is determined by $\Data_{t-1}$.
Up-front randomization is logistically simpler but does not use additional data that have accumulated as the subject progresses through the trial. Up-front randomization is preferred when simplicity of implementation
is a priority or when enrollment is expected to be slow relative to
the time to progress through the trial, so that the amount of new data
accumulating before a subject completes all $K$ stages is modest.
Under sequential randomization, subjects are randomized at each stage,
so that up-to-date information from previous subjects informs
randomization probabilities as a subject progresses, but involves
greater logistical complexity.

%\vspace*{-0.25in}

\subsection{Up-front randomization among regimes}
\label{ss:embedded}

\vspace*{-0.05in}

To randomize newly-enrolled subjects at week $t$ to the embedded
regimes $\bd^j$, $j=1,\ldots,m$, we require an estimator $\hattheta_t$
for
$\btheta
=(\theta^1,\ldots,\theta^m)^T=\{\calV(\bd^1),\ldots,\calV(\bd^m)\}^T$
based on $\Data_{t-1}$ with which to construct a confidence
distribution and thus randomization probabilities $r_{t}^j$.  Aligned with standard methods for SMARTs, we focus on
inverse probability weighted (IPW) and augmented IPW (AIPW) estimators
for $\theta^1,\ldots,\theta^m$
\citep[Sections~6.4.3-6.4.4]{tsiatis2020dynamic}.  As we discuss in
Section~\ref{s:theory}, under any form of RAR, the asymptotic
distribution of these estimators based on the data at the end of the
trial need not be normal; thus, we adapt the approach of
\citet{ZhangMest2021} and propose weighted versions of these
estimators, where the weights are chosen so that the weighted
estimators are asymptotically normal.  For stratified sampling, the
formulation applies within each stratum.

For subjects who enroll at week $t$, define
$\eta_{t,1}(a_1, \bx_1, \Data_{t-1}) = P(A_1 = a_1| \bX_1 = \bx_1,
\Data_{t-1}) = P(A_1 = a_1| \bH_1 = \bh_1, \Data_{t-1})$ and, for $k = 2,\ldots, K$,
$\eta_{t,k}(a_k, \xbar_k,\abar_{k-1},\Data_{t-1}) = P(A_k = a_k| \Xbar_k= \xbar_k,\Abar_{k-1}=\abar_{k-1},
\Data_{t-1}) =P(A_k = a_k| \bH_k=\bh_k,
\Data_{t-1})$.  For regime $\bd$, let $\dbar_1(\bx_1) = d_1(\bx_1)$,
$\ldots$, $\dbar_k(\xbar_k) = [d_1(\bx_1),d_2\{\xbar_2,d_1(\bx_1)\}$,
$\ldots$, $d_k\{\xbar_k, \dbar_{k-1}(\xbar_{k-1})\}]$, $k=3,\ldots,K$.
For each $\bd^j$, let $C^j = I\{\Abar_K = \dbar_K^j(\Xbar_K)\}$ be
the indicator that a subject's experience through all $K$ stages is
consistent with receiving treatment using $\bd^j$.  Each enrolled subject in
$\Data_{t-1}$ was randomized at week $\tau \leq t-1$ based on
$\Data_{\tau-1} \subseteq \Data_{t-1}$, thus using $r_{\tau}^j$,
$j=1,\ldots,m$.  For each, define
$\pi_{\tau, 1}^j(\bx_1) = \eta_{\tau,1}\{d_1^j(\bx_1),
\bx_1,\Data_{\tau-1}\}$,
$\pi_{\tau,k}^j(\xbar_k) =
\eta_{\tau,k}[d_k^j\{\xbar_k,\dbar_{k-1}^j(\xbar_{k-1})\},
\xbar_k,\dbar_{k-1}^j(\xbar_{k-1}),\Data_{\tau-1}]$, $k=2,\ldots,K$.
The weighted IPW (WIPW) estimator for $\theta^j$ is
\begin{equation}
\hspace*{-0.1in}  \widehat{\theta}^{\mathrm{WIPW},j}_{t} = 
\left[ \sumiN 
\frac{
W_{\tau_i}^j\Delta_{t-1,i}C_{i}^j 
}{ 
\{ \prod_{k=2}^K \pi_{\tau_i,k}^j(\overline{\bX}_{k,i})\} 
\pi_{\tau_i, 1}^j(\bX_{1,i})  
} \right]^{-1} \!\!\! \sumiN \frac{W_{\tau_i}^j\Delta_{t-1,i}C_{i}^j Y_i}{ 
\{ \prod_{k=2}^K \pi_{\tau_i,k}^j(\overline{\bX}_{k,i})\} 
\pi_{\tau_i, 1}^j(\bX_{1,i})},
\label{eq:IPWrand}
\end{equation}
where $W_{\tau}^j$ is a weight depending on
$\mathfrak{D}_{\tau-1}$ discussed further in Section~\ref{s:theory}.
The denominator in each term in (\ref{eq:IPWrand}) can be interpreted
as the propensity for receiving treatment consistent with $\bd^j$
through all $K$ stages and depends on $r_{\tau}^j$,
$j=1,\ldots,m$.  For example, in the cancer pain SMART with $K=2$, let
$X_1$ denote a patient's baseline pain score and $X_{2,1}$ their pain
score at the end of stage 1, so that $\Xbar_2 = (X_1,X_{2,1})^T$, and
let $X_{2,2} = I(X_{2,1} \leq 0.7\, X_1)$ denote response status.  For
embedded regime $\bd^1$, under which a subject will receive $A_1=0$
and then $A_2=0$ if they respond to $A_1$ and $A_2=1$ otherwise,
$C^1 = I(A_1=0, X_{2,2}=1, A_2=0)+I(A_1=0,X_{2,2}=0,A_2=1)$.  Because
regimes 1--4 assign $A_1=0$,
$\pi_{\tau, 1}^1(\bX_1) = \sum^4_{j=1} r_{\tau}^j$, and because
regimes 1 and 2 assign $A_2=0$ to responders and regimes 1 and 4
assign $A_2=1$ to nonresponders,
$\pi_{\tau, 2}^1(\overline{\bX}_2) = \{
I(X_{2,2}=1)(r_{\tau}^1+r_{\tau}^2) + I(X_{2,2}=0)
(r_{\tau}^1+r_{\tau}^4)\}/\sum^4_{j=1} r_{\tau}^j$.

Usual AIPW estimators incorporate baseline and interim information
\citep[e.g., ][Section~6.4.4]{tsiatis2020dynamic} to gain efficiency
over IPW estimators.  Accordingly, we consider a class of weighted
AIPW (WAIPW) estimators.  Let $W_{\tau}^{A,j}$ be a weight depending
on $\mathfrak{D}_{\tau-1}$; and
$\overline{C}_{k}^j = I\{ \Abar_k = \dbar_k^j(\Xbar_k) \}$,
$k=1,\ldots,K$, with $\overline{C}_{0}^j \equiv 1$.  Estimators in the
class are of the form
\begin{equation}
  \begin{aligned}
    &\widehat{\theta}_{t}^{\mathrm{WAIPW},j} = \left( \sumiN
      W_{\tau_i}^{A,j}\Delta_{t-1,i} \right)^{-1}  \sumiN
W_{\tau_i}^{A,j}\Delta_{t-1,i}\left( \frac{C_{i}^j Y_i}{ \left\{ \prod_{k=2}^K
                  \pi_{\tau_i,k}^j(\overline{\bX}_{k,i})\right\}
                  \pi_{\tau_i, 1}^j(\bX_{1,i})  }
                  \vphantom{\frac{\overline{C}_{k,i}^j}{\left\{\prod_{v=2}^{k-1}
                  \pi_{d,v,tau_i} (\overline{\bX}_{v,i})\right\} 
                  \pi_{d,1\tau_i}(\bX_{1,i}) } }\right.
 \\
  &\,\,+\left.\sum_{k=1}^K \left[\frac{\overline{C}_{k-1,i}^j}{ 
  \left\{\prod_{v=2}^{k-1} \pi_{\tau_i,v}^j(\overline{\bX}_{v,i})\right\}
    \pi_{\tau_i,1}^j(\bX_{1,i}) } - 
  \frac{\overline{C}_{k,i}^j}{ 
  \left\{\prod_{v=2}^{k} \pi_{\tau_i,v}^j(\overline{\bX}_{v,i})\right\}
    \pi_{\tau_i,1}^j(\bX_{1,i}) } \right] L^j_k(\overline{\bX}_{k,i}) \right), 
    \label{eq:AIPWrand}
  \end{aligned}
  \end{equation}
where $L^j_k(\xbar_k)$ is an arbitrary function of $\xbar_k$ and
$\bd^j$, $k=1,\ldots,K$; and
$\prod_{v=1}^0 \pi_{\tau,v}^j (\bx_v) \equiv 1$.

The optimal choice is
$L^j_k(\xbar_k) = E\{ Y^*(\bd^j) | \Xbar_k=\xbar_k, \Abar_{k-1} =
\dbar_{k-1}^j(\xbar_{k-1})\}$, $k=1,\ldots,K$, which can be modeled
and estimated by adapting the backward iterative scheme in
\citet[Section~6.4.2]{tsiatis2020dynamic}.  Just as the
denominators of each term in (\ref{eq:IPWrand}) and
(\ref{eq:AIPWrand}) depend on %the randomization probabilities
$r_{\tau}^j$, $j=1,\ldots,m$, based on $\Data_{\tau-1} \subseteq
\Data_{t-1}$, the fitted
models used to approximate the optimal $L^j_k(\Xbar_k)$ should be based on
$\Data_{\tau-1}$.
%%%%  This could possibly go in Supporting Information with just a
%%%%  brief statement in the text
At stage $K$, define
$Q_K(\xbar_K, \abar_K) = E(Y | \Xbar_K=\xbar_K, \Abar_K=\abar_K)$
and
$V_K^j(\xbar_K, \abar_{K-1}) = Q_K\{ \xbar_K, \abar_{K-1},
d_K^j(\xbar_K, \abar_{K-1}) \}$.  Posit a model
$Q_K(\xbar_K, \abar_K;\bbeta_K)$ for $Q_K(\xbar_K, \abar_K)$ indexed
by $\bbeta_K$, e.g., a linear or logistic model for continuous or
binary $Y$, respectively.  Let $\hatbeta_{K,t}$ denote an estimator
for $\bbeta_K$ based on subjects in
$\mathfrak{D}_{t-1}$ for whom $\Delta_{t-1}=1$, e.g., using least
squares or maximum likelihood, and define 
$\widehat{Q}_{K, t}(\xbar_K, \abar_K) = Q_K(\xbar_K,
\abar_K;\hatbeta_{K,t})$ and
$\tildeV_{K,t}^j(\xbar_K, \abar_{K-1}) =
\widehat{Q}_{K,t}\{ \xbar_K, \abar_{K-1}, d_K^j(a_K)\}$. Recursively for $k=K-1, \ldots, 1$, define
%\begin{equation*} 
$Q_k^j(\xbar_k, \abar_k) = \allowbreak E\{V_{k+1}^j(\Xbar_{k+1}, \Abar_{k})|\Xbar_{k}=\xbar_k, \Abar_{k}=\abar_k \}$
%\end{equation*}
and $V_k^j(\xbar_k, \abar_{k-1}) = Q_k^j\{ \xbar_k, \abar_{k-1}, d_k^j(\xbar_k, \abar_{k-1})\}$.  Posit a 
model $Q_k^j(\xbar_k, \abar_k;\bbeta_k^j)$ indexed by $\bbeta_k^j$, and obtain
estimator $\widehat{\bbeta}_{k,t}^j$ based on subjects in 
$\mathfrak{D}_{t-1}$ for whom $\Delta_{t-1}=1$ by an appropriate regression method 
using as the outcome the pseudo outcomes
%\begin{equation*}
$\tildeV_{k+1,t}^j(\Xbar_{k+1}, \Abar_{k}) = \widehat{Q}_{k+1,t}^j
\{\Xbar_{k+1}, \Abar_{k}, d_{k+1}^j(\Xbar_{k+1}, \Abar_{k})  \}$,
%\end{equation*}
and let $\widehat{Q}_{k,t}^j(\xbar_k, \abar_k)=Q_k^j(\xbar_k, \abar_k;\hatbeta_{k,t}^j)$.
As in \citet[Section~6.4.2]{tsiatis2020dynamic}, these models may involve
separate expressions for responders and nonresponders; and, if at
stage $k$ there is only one treatment option for a subject's history,
$Y$ (if $k=K-1$) or $\tildeV^j_{k+2,t}(\Xbar_{k+2}, \Abar_{k+1})$
(if $k<K-1$) can be ``carried back'' in place of
$\tildeV_{k+1,t}^j(\Xbar_{k+1}, \Abar_{k})$.   For each $i$ with
$\Delta_{t-1,i}=1$ in (\ref{eq:AIPWrand}), substitute
$Q_K\{\Xbar_{K,i},\dbar_K^j(\Xbar_{K,i}); \hatbeta_{K,\tau_i}\}$ and
$Q^j_k\{\Xbar_{k,i},\dbar_k^j(\Xbar_{k,i}); \hatbeta^j_{k,\tau_i}\}$
for $L^j_K(\Xbar_{K,i})$ and $L^j_k(\Xbar_{k,i})$, $k=1,\ldots,K-1$, respectively.

As the basis for RAR, we propose using (\ref{eq:IPWrand}) or
(\ref{eq:AIPWrand}), with or without weights, to obtain estimators
$\widehat{\theta}_{t}^j$ for $\theta^j = \calV(\bd^j)$,
$j=1,\ldots,m$, based on $\Data_{t-1}$; and take the estimated
confidence distribution for $\btheta$ needed to obtain
$\widehat{\rho}_{t}^j$ to form $r_{t}^j$, $j=1,\ldots,m$, to be the
asymptotic normal distribution for 
$\hattheta_t=\{\widehat{\theta}_{t}^1,\ldots,
\widehat{\theta}_{t}^m\}^T$ following from M-estimation theory 
\citep[e.g., ][Section~6.4.4]{tsiatis2020dynamic}, with the weights
treated as fixed.  %%%  Or should we say following from the theory
                       %%%  of Section~\ref{s:theory}

Basing confidence distributions and thus randomization
probabilities at week $t$ on (\ref{eq:IPWrand}) or (\ref{eq:AIPWrand})
uses data only on subjects in $\Data_{t-1}$ who have completed the
trial, $\Delta_{t-1}=1$.  To exploit partial information on
subjects still progressing through the trial at $t$, with
$\Delta_{t-1}=0$, it is possible to develop a weighted version of the
interim AIPW (IAIPW) estimator of \citet*{Cole2022}; see 
Appendix A.
Simulations in Section~\ref{s:sims} show negligible gains in
performance over (\ref{eq:IPWrand}) or (\ref{eq:AIPWrand}).

\vspace*{-0.35in}

\subsection{Sequential randomization based on the optimal regime}
\label{ss:eachstage}

\vspace*{-0.0in}

We propose methods for obtaining randomization probabilities at week
$t$ based on $\Data_{t-1}$ to be used to assign treatments for
subjects requiring randomization at $t$ at any stage $k=1,\ldots,K$.
Because the set of feasible treatments $\Psi_k(\bh_k)$ for a subject
with history $\bh_k$ at stage $k$ may depend on $\bh_k$, as when the
sets of options for responders and nonresponders to previous treatment
are different, randomization probabilities may be history dependent.
The approach uses Q-learning for estimation of an optimal,
individualized regime
\citep[e.g.,][Section~7.4.1]{tsiatis2020dynamic}.  We present the
approach when $Y$ is continuous and linear models are used; extensions
to other outcomes and more flexible models are possible
\cite[][]{moodie2014q}.

Define the Q-functions
$Q_K(\xbar_K,\abar_K) = E(Y \mid \Xbar_K=\xbar_K, \Abar_K=\abar_K)$
and, for $k=K-1,\ldots,1$,
$Q_k(\xbar_k,\abar_k) = E\left\{ V_{k+1}(\xbar_k,\bX_{k+1},\abar_k)
  \mid \Xbar_k=\xbar_k, \Abar_k=\abar_k \right\}$, where
$V_k(\xbar_k,\abar_{k-1}) = \max_{a_k \in \Psi_k(\xbar_k,\abar_{k-1})}
Q_k(\xbar_k,\overline{a}_{k-1},a_k)$, $k=1,\ldots, K$.  Posit models
$Q_k(\xbar_k,\abar_k;\bbeta_k) = \bphi_k(\xbar_k,\abar_k)^T \bbeta_k$,
where $\bphi_k(\xbar_k,\abar_k)$ is a $p_k$-dimensional feature
function, % and $\bbeta_k$ is $(p_k \times 1)$,
$k=1,\ldots,K$.  Randomization probabilities are obtained via the
following backward algorithm.  At stage $K$, obtain $\hatbeta_{K,t}$,
the ordinary least squares (OLS) estimator based on subjects in
$\Data_{t-1}$ for whom $\Delta_{t-1}=1$ and its estimated covariance
matrix
$\hatSig_{K,t} = \hatsigma^2_{t,K} \big\{ \sumiN \Delta_{t-1,i}
\bphi_K(\Xbar_{K,i},\Abar_{K,i})
\bphi_K(\Xbar_{K,i},\Abar_{K,i})^T\big\}^{-1}$,
$\hatsigma^2_{K,t} = N_t^{-1}\sumiN \Delta_{t-1,i} \{Y_i
-\bphi_K(\Xbar_{K,i},\Abar_{K,i})^T\hatbeta_{K,t}\}^2$,
$N_t = \sumiN \Delta_{t-1,i}$.
% % could make this $(N_t-p_k)$
Based on these results, obtain the estimated confidence distribution
$\hatP_{\bbeta_{K,t}}$ for $\bbeta_K$ as described below.
% The estimated confidence distribution $\hatP_{\bbeta_t,K}$ for
% $\bbeta_K$ is  then $\calN(\hatbeta_{t,K}, \hatSig_{t,K} )$.  
For $k=K-1,\ldots,1$, define pseudo outcomes %and $\bbeta_{k+1}\in \mathbb{R}^{p_{k+1}}$, let
$\tildeV_{k+1,t}(\bbeta_{k+1}) = \max_{a_{k+1} \in
  \Psi(\Xbar_{k+1},\Abar_{k})} Q_{k+1}( \Xbar_{k+1},\Abar_{k},
a_{k+1}; \bbeta_{k+1})$ 
for subjects in $\Data_{t-1}$ with $\Delta_{t-1}=1$. If a subject's history is such that
$\Psi_{k+1}(\Xbar_{k+1},\Abar_{k})$ comprises a single treatment option,
$\tildeV_{k+1,t}(\bbeta_{k+1})$ can be taken equal to the pseudo outcome 
at step $k+2$ or $Y$ if $k=K-1$.  Obtain an estimator
for $\bbeta_k$, $k = K-1,\ldots,1$, by OLS as
\begin{equation}
 \hatbeta_{k,t}(\hatbeta_{k+1,t}) = \mbox{argmin}_{\bbeta} \sumiN
\Delta_{t-1,i} \{ \tildeV_{k+1,t,i}(\widehat{\bbeta}_{k+1,t}) -
\bphi_k(\Xbar_{k,i},\Abar_{k,i})^T \bbeta \}^2,
\label{eq:hatbetak}
\end{equation}
and estimated covariance matrix $\hatSig_{k,t}(\hatbeta_{k+1,t}) = \hatsigma^2_{k,t} \big\{
\sumiN \Delta_{t-1,i}
\bphi_k(\Xbar_{k,i},\Abar_{k,i})\bphi_k(\Xbar_{k,i},\Abar_{k,i})^T\big\}^{-1}$,  
$\hatsigma^2_{k,t}(\widehat{\bbeta}_{k+1,t}) = N_{t}^{-1}\sumiN \Delta_{t-1,i}\{ \tildeV_{k+1,t,i}(\widehat{\bbeta}_{k+1,t}) -
\bphi_k(\Xbar_{k,i},\Abar_{k,i})^T \hatbeta_{k,t}(\hatbeta_{k+1,t})\}^2$.  Based on
these results, obtain an estimated confidence distribution for $\hatP_{\bbeta_{k,t}}$
as described next.  Note that, if subjects enter the SMART by a random
process, basing the fitted models on subjects in $\Data_{t-1}$ with
$\Delta_{t-1}$ is reasonable, as these subjects are representative of
the subject population.

Because $\hatbeta_{K,t}$ is a standard OLS estimator, it is natural to
approximate the confidence distribution $\hatP_{\bbeta_{K,t}}$ for
$\bbeta_K$ by $\calN(\hatbeta_{K,t}, \hatSig_{K,t} )$.  However, because
(\ref{eq:hatbetak}) is not a standard regression problem,
$\hatbeta_{k,t}(\hatbeta_{k+1,t})$, $k<K$, is a nonregular estimator
\citep[Section~10.4.1]{tsiatis2020dynamic}.  Thus, %A consequence is that
usual large sample theory does not apply and confidence
intervals based on (unadjusted) normal or bootstrap
approximations need not achieve nominal coverage.  Accordingly, we
obtain a confidence distribution $\hatP_{\bbeta_{k,t}}$ for
$\bbeta_k$, $k=1,\ldots,K-1$, in the spirit of a projection interval
\citep{laber_lizotte_qian_pelham_murphy_2014}, which faithfully
represents the uncertainty in
$\widehat{\bbeta}_{K-1,t},\dots,\widehat{\bbeta}_{1,t}$.
% The belief that each
% treatment combination is superior will be based on the estimated
% Q-functions and their data-driven uncertainty; this is preferable to
% \cite{cheung_chakraborty_davidson_2014} where the belief is based on
% the soft-max function that neglects this uncertainty.
%%%% Already said this.  

We demonstrate this approach for $K=3$.  For final stage 3, draw a
sample of size $B_3$, $\tilbeta_{3,t}^1,\ldots,\tilbeta_{3,t}^{B_3}$,
say, from $\hatP_{\bbeta_{3,t}}$, the approximate normal sampling
distribution $\calN(\hatbeta_{3,t}, \hatSig_{3,t})$ as above.
At stage 2, first draw a sample
$\tilbeta_{3,t}^1,\ldots,\tilbeta_{3,t}^{b_3}$ of size $b_3$ from
$\hatP_{\bbeta_3,t}$.  For each $\tilbeta_{3,t}^b$, form pseudo
outcomes
$\tildeV_{3,t,i}(\tilbeta_{3,t}^b) =
\max_{a_3 \in \Psi(\Xbar_{3,i},\Abar_{2,i})}
Q_3(\Xbar_{3,i},\Abar_{2,i},a_3; \tilbeta_{3,t}^b)$ and obtain
$\hatbeta_{2,t}(\tilbeta_{3,t}^b)$ by OLS analogous to
(\ref{eq:hatbetak}) with $k=2$, and obtain
$\hatSig_{2,t}(\tilbeta_{3,t}^b)$ similarly.  Then draw a sample
$\tilbeta_{2,t}^1,\ldots,\tilbeta_{2,t}^{b_2}$ from
$\calN\{ \hatbeta_{2,t}(\tilbeta_{3,t}^b),
\hatSig_{2,t}(\tilbeta_{3,t}^b)\}$.  The $b_3$ samples of size $b_2$
corresponding to each $\tilbeta_{3,t}^b$, $b=1,\ldots,b_3$,
collectively comprise a sample of size $B_2 = b_3 \times b_2$ from the
confidence distribution $\hatP_{\bbeta_{2,t}}$ for $\bbeta_2$.  At stage
1, again draw a sample $\tilbeta_{3,t}^1,\ldots,\tilbeta_{3,t}^{b_3}$
from $\hatP_{\bbeta_3,t}$ and obtain
$\hatbeta_{2,t}(\tilbeta_{3,t}^b)$ and
$\hatSig_{2,t}(\tilbeta_{3,t}^b)$, $b=1,\ldots,b_3$, as above.  Then
draw a sample $\tilbeta_{2,t}^1,\ldots,\tilbeta_{2,t}^{b_2}$ from
$\calN\{ \hatbeta_{2,t}(\tilbeta_{3,t}^b),
\hatSig_{2,t}(\tilbeta_{3,t}^b)\}$.  For each of $\tilbeta_{2,t}^b$,
$b=1,\ldots,b_2$, form pseudo outcomes
$\tildeV_{2,t,i}(\tilbeta_{2,t}^b) = \max_{a_2 \in \Psi(\Xbar_{2,i},A_{1,i})}
Q_2(\Xbar_{2,i},A_{1,i},a_2; \tilbeta_{2,t}^b)$, and obtain
$\hatbeta_{1,t}(\tilbeta_{2,t}^b)$ by OLS analogous to
(\ref{eq:hatbetak}) with $k=1$, and obtain
$\hatSig_{t,1}(\tilbeta_{2,t}^b)$.  Then draw a sample
$\tilbeta_{1,t}^1,\ldots,\tilbeta_{1,t}^{b_1}$ from
$\calN\{ \hatbeta_{1,t}(\tilbeta_{2,t}^b),
\hatSig_{1,t}(\tilbeta_{2,t}^b)\}$, $b=1,\ldots,b_2$.  The
$b_3 \times b_2$ samples of size $b_1$ corresponding to each
combination of draws $\tilbeta_{3,t}^b$, $\tilbeta_{2,t}^{b'}$,
$b = 1,\ldots,b_3$, $b' = 1,\ldots,b_2$, collectively comprise a
sample of size $B_1 = b_3 \times b_2 \times b_1$ from the confidence
distribution $\hatP_{\bbeta_1,t}$.  This procedure is embarrassingly
parallel, and as it involves only draws from a multivariate normal
distribution, it is not computationally burdensome.

Having generated $B_k$ draws from $\hatP_{\bbeta_{t,k}}$,
$k=1,\ldots,K$, at week $t$, we can obtain beliefs, i.e., draws from
approximate posterior distributions, which can be translated into
randomization probabilities for subjects requiring treatment
assignments at any stage $k=1,\ldots,K$ at week $t$.  Because the
$Q$-functions depend on patient history, the randomization
probabilities under TS can vary across subjects at a given time even
if their feasible treatments are the same.  To see this, suppose that
there are $s_k$ feasible sets of treatments at stage $k$, denoted by
$\zeta_{k}^u = \{a_{k}^{1,u},\ldots,a_{k}^{m_{k}^u,u}\}$,
$u=1,\ldots, s_k$.  Further, suppose that there are $\ell_{k,t}$
subjects requiring randomization at stage $k$ at week $t$ and that
$\ell_{k,t}^u$ are eligible for the $u$th feasible set.
For the $u$th feasible set, randomization probabilities can be
obtained for each subject $v=1,\ldots,\ell_{k,t}^u$ by defining the
belief for $a_{k}^{j,u}$, $j=1,\ldots,m_{k}^u$, depending on
$\Data_{t-1}$ for subject $v$ as
$\widehat{\rho}_{t,v}^{j,k,u} = B_k^{-1} \sum^{B_k}_{b=1} I\{ a_{k}^{j,u} =
\mbox{argmax}_{a \in \zeta_{k}^u} Q_k(\Xbar_{k,v},\Abar_{k-1,v},a;
\tilbeta_{k,t}^b)\},$ from which randomization probabilities for
subject $v$ are obtained.  If the Q-function model depends
on the history $\bh_k = (\xbar_k,\abar_{k-1})$ only through the
components such as previous treatment and response status that dictate
the feasible set, then this approach will yield the same probabilities
for all $\ell_{k,t}^u$ subjects.  Otherwise, randomization
probabilities will be individual-specific, depending on covariate and
treatment information in addition to $\Data_{t-1}$, which could be
logistically complex.  A second approach is to consider all
$(m_{k}^u)^{\ell_{k,t}^u}$ possible configurations for assigning the
options in $\zeta_{k}^u$ to the $\ell_{k,t}^u$ subjects and define the
beliefs and thus randomization probabilities based on the
configuration that maximizes the average $k$th Q-function across
subjects.  Letting $\Lambda_{k}^u$ be the set of all possible
configurations, writing the $j$th configuration as
$\mbox{\sf a}^j = (\mbox{\sf a}_1^j,\ldots,\mbox{\sf
  a}_{\ell_{k,t}^u}^j)$, define
    $\widehat{\rho}_{t}^{j,k,u} = B_k^{-1} \sum_{b=1}^{B_k}
  I\Big\{ \mbox{\sf a}^j = \mbox{argmax}_{\mbox{\sf a} \in \Lambda_{k}^u} 
(\ell_{k,t}^u)^{-1} \sum_{v=1}^{\ell_{k,t}^u}
Q_k(\Xbar_{k,v},\Abar_{k-1,v},\mbox{\sf a};
\tilbeta_{k,t}^b)\Big\}.$
Simulation experiments (not shown here) suggest that
the two approaches perform similarly.  
  
\vspace*{-0.25in}

\section{Post-Trial Inference}
\label{s:theory}

Although the potential outcomes $\bW^*_i$, $i=1,\ldots,N$, are
independent and identically distributed (i.i.d.), under any form of
RAR, the observed data are not, as the randomization probabilities are
functions of the past data $\Data_{t-1}$ \citep{ZhangMest2021}.  Thus,
post-trial evaluation of $\calV(\bd^1),\ldots,\calV(\bd^m)$ based on
the usual unweighted IPW or AIPW estimators is potentially
problematic, as standard asymptotic theory for these estimators, which
assumes i.i.d. data, does not apply
\citep[e.g.,][]{bibaut2021post,zhang2021inference}.  Thus, we
adapt the approach of \citet{ZhangMest2021} and choose the weights
in (\ref{eq:IPWrand}) and (\ref{eq:AIPWrand}) so that
asymptotic normality for these estimators can be established via the
martingale central limit theorem %%% \citep[][]{hall2014martingale}.
 We sketch the rationale; details are given in Appendix B.

%%  This is for the up-front, so that the t's in the denoms are the
%%  same.  In the appendix need to explain the slight change in the
%%  denominators for sequential randomization

%%% Per Eric, take this asymptotic point of view
To emphasize the key ideas, consider a simplified setting in which a
fixed number of subjects, $n$, enrolls at each week and
$T \rightarrow \infty$, so that the total number of subjects
$N = nT \rightarrow \infty$.  To simplify notation, take $n=1$.  At
the end of the trial, with all data complete, reindexing subjects by
$t=1,\ldots,T$, for regime $j$, the estimators
$\widehat{\theta}_{T}^{\mathrm{WIPW},j}$ and
$\widehat{\theta}_{T}^{\mathrm{WAIPW},j}$, say, are
solutions in $\theta^j$ to an estimating equation 
$\sum_{t=1}^T M_{t}^j (\Xbar_{K,t},\Abar_{K,t},Y_{t}; \theta^j) = 0$;
e.g., for (\ref{eq:IPWrand}),
\begin{equation}
M_{t}^j (\Xbar_{K,t},\Abar_{K,t},Y_t; \theta^j) = \frac{ W_{t}^j
  C^j_t}{ \{ \prod_{k=2}^K \pi_{t,k}^j(\Xbar_{k,t})\}
  \pi_{t,1}^j(\bX_{1,t}) } (Y_t - \theta^j). 
\label{eq:MestIPW}
\end{equation}
Critical to the proof 
is that the weights $W_{t}^j$ and $W^{A,j}_{t}$ are chosen
so that (i) the estimating equations remain conditionally unbiased, 
$E\{ M_{t}^j (\Xbar_{K,t},\Abar_{K,t},Y_t; \theta^j) | \Data_{t-1}\} = 0$,
and
(ii) the variance is stabilized, 
$E[ \{M_{t}^j (\Xbar_{K,t},\Abar_{K,t},Y_t;
\theta^j)\}^2 | \Data_{t-1}]$ $= \sigma^2 > 0$ for all $t$.  Writing
$\widehat{\theta}_{T}^j$ to denote either estimator and defining
$\delta_{T}^j = T^{-1} \sum_{t=1}^T  \partial/\partial \theta^j \{ M_{t}^j
(\Xbar_{K,t},\Abar_{K,t},Y; \theta^j )\}_{\theta^j=\widehat{\theta}_{T}^j}$ and
$(\sigma^{j}_T)^2 = T^{-1} \sum_{t=1}^T M_{t}^j(\Xbar_{K,t},\Abar_{K,t},Y_t;  \widehat{\theta}_{T}^j)^2 $, we
argue in Appendix B that, under standard regularity conditions,  
$\widehat{\theta}_{T,j} \inp \theta^j$ and
\begin{equation}
\delta^j_{T} (\sigma_T^j)^{-1} T^{1/2} (\widehat{\theta}_{T}^j - \theta^j) \inD \calN(0,1).
\label{eq:normality}
\end{equation}
In Section~\ref{s:sims}, we use (\ref{eq:normality}) to construct confidence intervals and
bounds for $\theta^j$, $j=1,\ldots,m$.

We sketch arguments for the WIPW estimator to show that (i) holds and
how to choose the weights to guarantee (ii); see 
Appendix B for details and arguments for the WAIPW estimator.
Define
$\Xbar_k^*(\bd^j) = \{\bX_1, \bX_2^*(\bd^j), \ldots, \bX_k^*(\bd^j)\}$, $k=1,\ldots,K$, 
and write
$\pi_t^{*,j}\{ \Xbar_K^*(\bd^j) \} =  \prod_{k=2}^K\pi_{t,k}^j\{\Xbar_k^*(\bd^j)\} \pi_{t,1}^j(\bX_1)$.
% Then from from (\ref{eq:MestIPW}),
% $\partial/\partial \theta^j \{ M_{t,j} (\Xbar_{K,t},\Abar_{Kti},Y; \theta^j )\}= - W_{t}^j C^j/
% \pi_t^{*,j}( \Xbar_{K,t} )$. %%%%  we probably don't need
% to show this
When $C^j_t=1$, $Y_t = Y^*(\bd^j)$, $\Xbar_{K,t} = \Xbar_K^*(\bd^j)$, so
that (i) is 
\begin{align*}
  E&\{ M_{t}^j (\Xbar_{K,t},\Abar_{K,t},Y_t; \theta^j)  | \Data_{t-1}\}
   =  W_{t}^j E \left[ \left. \frac{ C^j_t}{ \pi_t^{*,j}\{ \Xbar_K^*(\bd^j) \} } \{Y^*(\bd^j) - \theta^j\} \right|
     \Data_{t-1}\right] \\
  &= W_{t}^j  E \left[ \left. \frac{ E(C^j_t| \mathcal{W}^*, \Data_{t-1})}{ \pi_t^{*,j}\{ \Xbar_K^*(\bd^j) \}
   } \{Y^*(\bd^j) - \theta^j\} \right|
     \Data_{t-1}\right] = W_{t}^j E\{ Y^*(\bd^j) - \theta^j \} = 0,
  \end{align*}
 because, as in \citet[][Section~6.4.3]{tsiatis2020dynamic}, $E(C^j_t|
 \mathcal{W}^*, \Data_{t-1}) = \pi_\tau^{*,j}\{ \Xbar_K^*(\bd^j)\}$, and $\mathcal{W}^*$ and thus $Y^*(\bd^j)$ is
 independent of $\Data_{t-1}$.  Using similar manipulations, 
 \begin{equation}
E\{ M_{t}^j (\Xbar_{K,t},\Abar_{K,t},Y_t; \theta^j)^2  | \Data_{t-1}\} =
    (W_{t}^j )^2 E\left[ \left. \frac{ \{ Y^*(\bd^j) - \theta^j \}^2
      }{\pi_t^{*,j}\{ \Xbar_K^*(\bd^j) \}} \right| \Data_{t-1} \right].
      \label{eq:Mvar}
\end{equation}
Thus, to ensure (ii), $W_{t}^j$ should be chosen so that
(\ref{eq:Mvar}) is a constant depending only on $j$.

We demonstrate the choice of $W_{t}^j$ in practice for regime 1 of
the cancer pain SMART, $K=2$, and up-front randomization.  From
Section~\ref{ss:embedded}, letting
$X^*_{2,2}(\bd^1) = I\{ X^*_{2,1}(\bd^1) \leq 0.7\, X_1\}$,
$\pi_{t, 1}^1(\bX_1) = \pi_{t,1}^1 = \sum^4_{j=1} r_{t}^j$,
and
$\pi_{t, 2}^1\{\Xbar^*_2(\bd^1)\}= I\{ X_{2,2}^*(\bd^1) = 1\}
\pi^{1(1)}_{t,2} + I\{ X_{2,2}^*(\bd^1) = 0\} \pi^{1(0)}_{t,2}$,
where $\pi^{1 (1)}_{t,2}= (r_{t}^1+r_{t}^2)/\pi_{t,1}^1$
and $\pi^{1 (0)}_{t,2}(r_{t}^1+r_{t}^4)/\pi_{t,1}^1$.
Then, using $\mathcal{W}^* \independent\Data_{t-1}$, (\ref{eq:Mvar}) is
  \begin{equation}
(W_{t,1} )^2 \left( \frac{ \mu^{1(1)}} {\pi^{1
      (1)}_{t,2}\pi_{t, 1}^1} + \frac{ \mu^{(0)}_1} {\pi^{1
      (0)}_{t,2}\pi_{t, 1}^1} \right), \,\,\, \mu^{1(s)} = E[  I\{X_{2,2}^*(\bd^1) = s\} \{ Y^*(\bd^1) - \theta^1\}^2], \,s = 0, 1.
\label{eq:regime1var}
\end{equation}
Then, in the original notation, estimate
$\mu^{1(s)}_1$ at week $t$ by
$\hatmu^{1(s)}_{t} = N^{-1}_t \sumiN \big[ \Delta_{t-1,i}I(X_{2,2,i}=s) C^1_i (Y_i-\widetilde{\theta}_{t}^1)^2/\{\pi^{1
    (s)}_{\tau_i,2}\pi_{\tau_i, 1}^1\}\big]$, $s=0,1$, based on
  $\Data_{t-1}$, where
$\widetilde{\theta}_{t}^1$ is an estimator for $\theta^1$ using
$\Data_{t-1}$ (we use the unweighted IPW estimator).  Setting
(\ref{eq:regime1var}) equal to a constant $\Xi^1$, say, and defining
$\widehat{\Xi}_{t}^1 = \sum_{s=0}^1 \{\hatmu^{1(s)}_{t}/(\pi^{1(s)}_{t,2}\pi_{t, 1}^1) \}$,
estimate $\Xi^1$ based on the burn-in data, for which
$W_{t}^j \equiv 1$ and $r_{t}^j$, $j=1,\ldots,8$, are fixed
(nonadaptive) for $t \leq t^*$, by $\widehat{\Xi}_{t^*}^1$.  Then
  for $t \geq t^*+1$, take $W_{t}^1 =
(\widehat{\Xi}^1_{t^*}/\widehat{\Xi}^1_{t})^{1/2}$.
See Appendix B for considerations for sequential RAR.

%\vspace*{-0.25in}

\section{Simulation studies}
\label{s:sims}

%\vspace*{-0.1in}

\subsection{Up-front randomization among embedded regimes}
\label{ss:embeddedsims}

%\vspace*{-0in}

%%%  Change \theta to \gamma in the generative models as we use
%%%  \theta for the values of regimes.
We present results of a simulation study involving 5000 Monte Carlo
trials under a scenario mimicking the cancer pain SMART introduced in
Section~\ref{s:intro} in which subjects are randomized using various
forms of up-front RAR as in Section~\ref{ss:embedded}.  % Additional
% studies are reported in Web Appendix C of the Supporting Information.
Each trial enrolls $N=1000$ subjects, with enrollment times uniform
over (integer) weeks 1-24. Upon enrollment, we draw baseline pain
score $X_1 \sim \calN(5,1)$ and assign stage 1 treatment
$A_1 \in \mathcal{A}_1=\{0,1\}$.  Six weeks after $A_1$ is assigned,
second-stage pain score is generated as
$X_{2,1} = \gamma_{1,0} + \gamma_{1,1} X_1 + \gamma_{1,2} A_1 +
\varepsilon_1$, where $\varepsilon_1 \sim \calN(0,1)$, and response
status after the first stage is $X_{2,2}=I(X_{2,1}<0.7X_1)$, which,
with $A_1$, dictates the feasible subset of $\calA_2 = \{0,1,2,3,4\}$
from which stage 2 treatment $A_2$ is assigned.  Six weeks later, the
outcome is generated as
$Y=\gamma_{2,0} + \gamma_{2,1} X_1 + \gamma_{2,2} A_1 + \gamma_{2,3}
X_{2,1} + \gamma_{2,4}I(A_2=1) + \gamma_{2,5}I(A_2=2) +
\gamma_{2,6}I(A_2=3) + \gamma_{2,7}I(A_2=4) + \varepsilon_2$, where
$\varepsilon_2 \sim \calN(0,1)$.  With
$\bgamma_1 = (\gamma_{1,0},\gamma_{1,1},\gamma_{1,2})^T = (0.00, 0.90,
-1.50)^T$ and
$\bgamma_2 = (\gamma_{2,0},\ldots,\gamma_{2,7}) =
(0.00,0.30,-0.75,0.60,-0.25,-0.75,-0.75,-0.85)^T$, for the 
$m=8$ embedded regimes defined in Figure~\ref{fig:example_smart}, 
$\{ \calV(\bd^1),\ldots,\calV(\bd^8)\} = (\theta_1,\ldots,\theta_8) =
(-0.126, -0.374, -0.500, -0.251, -2.408,$ $-2.401, -2.494, -2.501)$, so
that regime 8 is optimal, as larger reductions in pain are desirable.
The burn-in period ends at the time $t^*$ when each of the $m=8$ regimes has at
least 25 subjects who have completed the trial with experience
consistent with following the regime.
% (note: some subjects are assigned regime
% $d^j$ and end up also following $d^\ell$, this also counts towards the
% burn-in), we begin adaptive randomization. We run 5000 SMARTs to
% examine these methods on average.

We compare the performance of up-front RAR using TS with $c_t = 0.5$
and 1 for all $t$ based on the WIPW and WAIPW estimators
(\ref{eq:IPWrand}) and (\ref{eq:AIPWrand}) and the unweighted IAIPW
estimator to that of simple, uniform randomization (SR) among the
eight embedded regimes. % Results based on unweighted IPW and AIPW
% estimators are in Web Appendix C.
For each TS variant, at any $t$ we
impose clipping constants of 0.05 and 0.95, so if $r_{t}^j<0.05$ for any
$j=1,\ldots,8$, set $r_{t}^j=0.05$ (and likewise for 0.95), and then normalize $r_{t}^j$,
$j=1,\ldots,8$, to sum to one.  In the WAIPW and IAIPW estimators, 
% $$ \pi_{t,j} = r(\widehat{\rho}_{t,1},\dots,\widehat{\rho}_{t,m},0.05,c_t) = \frac{ \max\left(\widehat{\rho}_{t,j},0.05\right)^{c_t}}{\sum_{v=1}^8 \max\left(\widehat{\rho}_{t,v},0.05\right)^{c_t}}. $$
$Q_{2}(\xbar_2,\abar_2;\bbeta_2) = \beta_{2,0} + \beta_{2,1}x_1 + \beta_{2,2}I(a_1=1) +
\beta_{2,3}x_{2,1} + \beta_{2,4}I(a_2=1) + \beta_{2,5}I(a_2=2) +
\beta_{2,6}I(a_2=3) + \beta_{2,7}I(a_2=4)$ and
$Q_{1}^j(x_1,a_1;\bbeta_{1}^j) = \beta_{1,0}^j + \beta_{1,1}^j x_1 + 
\beta_{1,2}^ja_1 + \beta_{1,3}^j x_1a_1$.

Results are displayed in Table~\ref{t:one}.  To evaluate how TS
improves in-trial outcomes, we report Monte Carlo average outcome
across individuals in the trial; average proportion of subjects
assigned $A_1=1$, the optimal first-stage treatment; and average
proportion of subjects who were consistent with the optimal regime
within the trial.  Outcomes are improved using TS over SR, with TS
resulting in lower average outcome and higher rates of assigning
optimal stage 1 treatment and optimal regime.  More aggressive
adaptation, $c_t=1$, yields more favorable results than $c_t=0.5$ when
using any estimator. Results are more favorable for WAIPW- and
IAIPW-based randomization, as those estimators exploit covariate
information, with the gains mostly at stage 2. Using weighted
vs. unweighted estimators yields modest in-trial improvements.

% Further details on coverage for additional regimes and plots for other
% randomization methods are included in the supplemental materials.

%% TABLE 1 HERE

\begin{table}
  \centering
  \caption{Simulation results using up-front RAR based on TS for 5000 Monte Carlo (MC) replications
    for the scenario in Section~\ref{ss:embeddedsims}. Columns
    indicate the randomization method: WAIPW(0.5) is TS 
    based on the WAIPW estimator (\ref{eq:AIPWrand}) with $c_t=0.5$
    for all $t$ and AIPW(1) uses $c_t=1$ for all $t$; WIPW($c_t$) and
    IAIPW($c_t$) are defined similarly. SR denotes simple, uniform
    randomization. Mean $Y$ denotes the MC average mean outcome for
    the 1000 individuals in the trial; lower mean outcomes are more
    favorable. Proportion $A_1$ Opt is the MC average proportion of
    subjects assigned the optimal treatment at the first-stage;
    Proportion Regime Opt is the MC average proportion of subjects who
    followed the optimal regime in the trial. For estimation results,
    Regime Correct denotes the proportion of trials we correctly
    estimate $\bd^8$ to be the optimal regime; $\mathcal{V}(\bd^8)$
    MSE is the MC mean squared error for regime
    8. For the estimation results, the term in parentheses, e.g.,
    (IPW), denotes the estimator used after the trial is completed.
    $\mathcal{V}(\bd^8)$ $95\%$ CI is the MC proportion of $95\%$
    confidence intervals that cover the true value; the term in
    parentheses is the estimator used to construct the confidence
    interval.  $\mathcal{V}(\bd^8)$ $95\%$ LB for lower confidence
    bounds $\mathcal{V}(\bd^8)$ $95\%$ UB for upper confidence bounds
    are defined similarly. Bold values indicate the most favorable
    result among the randomization methods.  Standard deviations of
    entries are in parentheses}
\label{t:one}
 \begin{center}
\scalebox{0.65}{
 \begin{tabular}{l c c c c c c c} \Hline
& SR & WIPW(0.5) & WIPW(1) & WAIPW(0.5) & WAIPW(1) & IAIPW(0.5) & IAIPW(1) \\
\hline
\underline{In Trial} &  & & & & & &  \\
Mean $Y$ & -1.380 (0.001) & -1.795 (0.001) & -1.992 (0.001) & -1.794 (0.001) & \textbf{-1.996} (0.001) & -1.793 (0.001) & -1.993 (0.001) \\
Proportion $A_1$ Opt & 0.500 (0.000) & 0.691 (0.000) & \textbf{0.782} (0.000) & 0.691 (0.000) & \textbf{0.782} (0.000) & 0.690 (0.000) & \textbf{0.782} (0.000) \\
Proportion Regime Opt & 0.250 (0.000) & 0.390 (0.001) & 0.470 (0.001) & 0.401 (0.001) & \textbf{0.498} (0.002) & 0.402 (0.001) & 0.491 (0.001) \\
 &  &  &  & & &  &  \\
%%% 12/13 6:37am - above is updated, estimation is now yet
\underline{Estimation} &  & & & & & &  \\
Regime Correct (IPW) & 0.433 (0.007) & \textbf{0.462} (0.007) & 0.444 (0.007) & 0.449 (0.007) & 0.409 (0.007) & 0.441 (0.007) & 0.430 (0.007) \\
Regime Correct (WIPW) & 0.397 (0.007) & \textbf{0.468} (0.007) & 0.460 (0.007) & 0.463 (0.007) & 0.436 (0.007) & 0.452 (0.007) & 0.460 (0.007) \\
Regime Correct (AIPW) & 0.529 (0.007) & \textbf{0.562} (0.007) & 0.534 (0.007) & 0.559 (0.007) & 0.523 (0.007) & 0.555 (0.007) & 0.534 (0.007) \\
Regime Correct (WAIPW) & 0.516 (0.007) & 0.548 (0.007) & 0.540 (0.007) & \textbf{0.555} (0.007) & 0.524 (0.007) & 0.545 (0.007) & 0.539 (0.007) \\
%%%
 &  &  &  & & &  &  \\
$\mathcal{V}(\bd^8)$ MSE $\times 10^2$ (IPW) & 0.814 (0.017) & 0.619 (0.013) & 0.735 (0.024) & \textbf{0.591} (0.013) & 0.671 (0.019) & 0.619 (0.013) & 0.669 (0.016)  \\

$\mathcal{V}(\bd^8)$ MSE $\times10^2$ (WIPW) & 1.185 (0.024) & 0.598 (0.013) & 0.646 (0.012) & \textbf{0.560} (0.012) & 0.576 (0.015) & 0.578 (0.012) & 0.574 (0.014) \\

$\mathcal{V}(\bd^8)$ MSE $\times 10^2$ (AIPW) & 0.544 (0.011) & 0.419 (0.008) & 0.463 (0.012) & \textbf{0.407} (0.008) & 0.446 (0.011) & 0.438 (0.009) & 0.432 (0.009) \\

$\mathcal{V}(\bd^8)$ MSE $\times10^2$ (WAIPW) & 0.552 (0.012) & 0.411 (0.008) & 0.419 (0.010) & \textbf{0.400} (0.008) & 0.420 (0.011) & 0.427 (0.009) & 0.411 (0.009) \\
 &  &  &  & & &  &  \\
\underline{Coverage} &  & & & & & &  \\
$\mathcal{V}(\bd^8)$ 95\% CI (IPW) & 0.949 (0.003) & 0.950 (0.003) & 0.949 (0.003) & 0.954 (0.003) & 0.947 (0.003) & 0.951 (0.003) & 0.950 (0.003)  \\
$\mathcal{V}(\bd^8)$ 95\% CI (WIPW) & 0.943 (0.003) & 0.948 (0.003) & 0.945 (0.003) & 0.953 (0.003) & 0.944 (0.003) & 0.949 (0.003) & 0.948 (0.003) \\
$\mathcal{V}(\bd^8)$ 95\% CI (AIPW) & 0.944 (0.003) & 0.952 (0.003) & 0.951 (0.003) & 0.952 (0.003) & 0.954 (0.003) & 0.950 (0.003) & 0.950 (0.003) \\
$\mathcal{V}(\bd^8)$ 95\% CI (WAIPW) & 0.947 (0.003) & 0.951 (0.003) & 0.950 (0.003) & 0.951 (0.003) & 0.948 (0.003) & 0.947 (0.003) & 0.944 (0.003) \\
%%%%
 &  &  &  & & &  &  \\
$\mathcal{V}(\bd^8)$ 95\% LB (IPW) & 0.952 (0.003) & 0.960 (0.003) & 0.960 (0.003) & 0.954 (0.003) & 0.954 (0.003) & 0.950 (0.003) & 0.953 (0.003)  \\
$\mathcal{V}(\bd^8)$ 95\% LB (WIPW) & 0.946 (0.003) & 0.954 (0.003) & 0.948 (0.003) & 0.950 (0.003) & 0.947 (0.003) & 0.947 (0.003) & 0.945 (0.003) \\
$\mathcal{V}(\bd^8)$ 95\% LB (AIPW) & 0.944 (0.003) & 0.959 (0.003) & 0.955 (0.003) & 0.951 (0.003) & 0.954 (0.003) & 0.950 (0.003) & 0.952 (0.003) \\
$\mathcal{V}(\bd^8)$ 95\% LB (WAIPW) & 0.946 (0.003) & 0.957 (0.003) & 0.950 (0.003) & 0.947 (0.003) & 0.947 (0.003) & 0.944 (0.003) & 0.945 (0.003) \\
%%%%
 &  &  &  & & &  &  \\
$\mathcal{V}(\bd^8)$ 95\% UB (IPW) & 0.950 (0.003) & 0.944 (0.003) & 0.940 (0.003) & 0.955 (0.003) & 0.942 (0.003) & 0.949 (0.003) & 0.944 (0.003)  \\
$\mathcal{V}(\bd^8)$ 95\% UB (WIPW) & 0.946 (0.003) & 0.947 (0.003) & 0.944 (0.003) & 0.955 (0.003) & 0.946 (0.003) & 0.952 (0.003) & 0.947 (0.003) \\
$\mathcal{V}(\bd^8)$ 95\% UB (AIPW) & 0.952 (0.003) & 0.947 (0.003) & 0.948 (0.003) & 0.954 (0.003) & 0.949 (0.003) & 0.944 (0.003) & 0.948 (0.003) \\
$\mathcal{V}(\bd^8)$ 95\% UB (WAIPW) & 0.952 (0.003) & 0.946 (0.003) & 0.952 (0.003) & 0.956 (0.003) & 0.949 (0.003) & 0.948 (0.003) & 0.950 (0.003) \\
\hline
\end{tabular}
}
\end{center}
\end{table}

To evaluate post-trial performance, Table~\ref{t:one} shows the
proportion of trials where $\bd^8$ is correctly identified as optimal
and the mean-squared error (MSE) for $\widehat{\theta}^8$ based on the
final data at the end of the trial.  Relative to SR, using RAR based
on any of the estimators generally identifies optimal regimes at
higher rates.
% there is improvement identifying the
% optimal regime for aggressive adaptive randomization and stronger
% improvement with $c_t=0.5$. For the adaptive methods, the WIPW
% estimator generally identifies the optimal regimes at higher rates
% than IPW; the WAIPW and AIPW estimators are similar.
Performance of 95\% confidence intervals and lower/upper confidence
bounds for the true value based on the asymptotic theory in
Section~\ref{s:theory} is presented for $\calV(\bd^8)$; that for other
regimes is qualitatively similar.  Regardless of randomization
scheme, the nominal level is achieved in almost every case.
Figure~\ref{f:ftwo} shows the distribution of the 5000 centered and
scaled final value estimates $\widehat{\theta}^j$, $j=1, 8$, obtained
using the IPW, AIPW, WIPW, and WAIPW estimators based on RAR using the
WAIPW estimator with $c_t=1$; those for other schemes are similar.  As
found by \citet{ZhangMest2021} and others, unweighted
estimators result in mildly skewed distributions, while 
weighted estimators yield approximate standard normality.
Overall, the results suggest that the weighted estimators effect the
desired adjustment to yield asymptotic normality.  These observations
hold over all simulation scenarios we have tried with continuous and
binary outcomes, including where several regimes have the same
or similar values, under which departures from normality may be
particularly problematic \citep[e.g., ][]{Hadad2021}.

\begin{figure}
  \centering
	\includegraphics[width=\textwidth]{waipw.png}
  \caption{Monte Carlo distributions of centered and scaled estimates
    as in the theory of Section~\ref{s:theory} for selected estimators
    in the simulation in Section~\ref{ss:embeddedsims}.  The histograms
    correspond to the indicated estimator for $\mathcal{V}(\bd^1)$ and
    $\mathcal{V}(\bd^8)$ under up-front randomization using TS based on the
    WAIPW estimator (\ref{eq:AIPWrand}) with $c_t=1$ for all $t$.  The
    vertical line indicates mean zero, and the density of a standard
    normal distribution is superimposed.}
  \label{f:ftwo}
\end{figure}

\subsection{Sequential randomization at each stage based on optimal regimes}
\label{ss:qlearnsims}

\vspace*{-0.0in}

We report on a simulation study involving 5000 Monte Carlo trials
under the two-stage scenario in Section~\ref{ss:embeddedsims} to
compare the Q-learning-based sequential RAR approach using TS in
Section~\ref{ss:eachstage} with $c_t = 0.25, 0.50, 0.75$ and 1 to SR
and two SMART-AR methods proposed by
\citet{cheung_chakraborty_davidson_2014}, a conservatively-tuned
(AR-1) and a more aggressive (AR-2) version.  % Additional scenarios are
% in Web Appendix C
.  To implement all RAR methods, we posit
linear models
$Q_{2}(\xbar_2,\abar_2;\bbeta_2) = \beta_{2,0} + \beta_{2,1}x_1 + \beta_{2,2}I(a_1=1) +
\beta_{2,3}x_{2,1} + \beta_{2,4}I(a_2=1) + \beta_{2,5}I(a_2=2) +
\beta_{2,6}I(a_2=3) + \beta_{2,7}I(a_2=4)$ and
$Q_{1}(x_1,a_1;\bbeta_{1}) = \beta_{1,0} + \beta_{1,1} x_1 + 
\beta_{1,2}a_1$.  For sequential RAR, we set
$B_1 = b_2 \times b_1 = 32 \times 32 = 1024$ and $B_2 = 1000$. The
tuning parameters for AR-1 are $b=10, \tau=0.5$ and
$b=100, \tau=0.025$ for AR-2 and $\lambda_t = t^{-1}\tau^{(1-b)}$ for
both; see \citet{cheung_chakraborty_davidson_2014}.  Clipping
constants of 0.05 and 0.95 were imposed on all methods.

At each week, for each RAR method, newly-enrolled subjects are
assigned stage 1 treatment using the same randomization
probability. Already-enrolled subjects who have reached stage 2 at
this week and require stage 2 randomization are partitioned into four
groups based on $(a_1, x_{2,2}) = (0,0), (0,1), (1,0), (1,1)$. Within
each group, randomization probabilities are calculated; thus,
second-stage probabilities are specific to each stage 1
treatment-response status combination. For both AR methods, we use the
sample average of $X_1$ to calculate the components that make up
the stage 2 probabilities; see
\citet{cheung_chakraborty_davidson_2014}.  The burn-in period ends at
the time $t^*$ when at least 25 subjects have completed the trial with
experience consistent with each of the $m=8$ embedded regimes.

Table~\ref{t:two} presents the results.  As in
Section~\ref{ss:embeddedsims}, the sequential RAR method results in
improved (over SR) in-trial outcomes on average by assigning
optimal treatments and regimes at higher rates. The AR methods also
improve on SR but are relatively conservative.  We were unable to
achieve more favorable results with differently-tuned versions of the
AR method.  For post-trial estimation, less-aggressive RAR identifies
the optimal regime at higher rates than SR.
The proposed method with $c_t=1$ and 0.75 yields the best post-trial
performance using any of the IPW, WIPW, AIPW, or WAIPW estimators but
lower rate of identifying the optimal regime. As expected, the AIPW
and WAIPW estimators are more efficient than the IPW and WIPW
estimators. While the primary goal of the weighted estimators is to
attain nominal coverage, an additional benefit is higher rates of
identifying the optimal regime.

\begin{table}
  \centering
  \caption{Simulation results using sequential RAR based on TS for 5000 Monte Carlo replications for
    the scenario in Section~\ref{ss:embeddedsims}. Columns indicate
    the randomization method: TS(0.25) is TS via
    Q-learning with $c_t=0.25$ for all $t$, TS(0.50), TS(0.75), and
    TS(1) are defined similarly. AR-1 is the conservatively-tuned
    version of SMART-AR and AR-2 is the more aggressive version. SR
    denotes simple, uniform randomization.  All entries are defined as
    in Table~\ref{t:one}.  }
  \label{t:two}
  \begin{center}
\scalebox{0.65}{
\begin{tabular}{l c c c c c c c}
  \hline
& SR & TS(0.25) & TS(0.50) & TS(0.75) & TS(1) & AR-1 & AR-2 \\
\hline
\underline{In Trial} &  & & & & & &  \\
Mean $Y$ & -1.380 (0.001) & -1.976 (0.001) & -1.999 (0.001) & -2.014 (0.001) & \textbf{-2.206} (0.001) & -1.950 (0.001) & -1.957 (0.001) \\
Proportion $A_1$ Opt & 0.500 (0.000) & 0.772 (0.001) & 0.780 (0.001) & 0.785 (0.001) & \textbf{0.790} (0.000) & 0.775 (0.000) & 0.775 (0.000) \\
Proportion Regime Opt & 0.250 (0.000) & 0.445 (0.001) & 0.491 (0.001) & 0.517 (0.001) & \textbf{0.538} (0.002) & 0.320 (0.001) & 0.351 (0.001) \\
 &  &  &  & & &  &  \\
%%% 
\underline{Estimation} &  & & & & & &  \\
Regime Correct (IPW) & 0.433 (0.007) & 0.471 (0.007) & 0.463 (0.007) & 0.423 (0.007) & 0.404 (0.007) & 0.468 (0.007) & \textbf{0.484} (0.007) \\
Regime Correct (WIPW) & 0.397 (0.007) & 0.480 (0.007) & 0.471 (0.007) & 0.452 (0.007) & 0.434 (0.007) & 0.480 (0.007) & \textbf{0.488} (0.007) \\
Regime Correct (AIPW) & 0.529 (0.007) & 0.568 (0.007) & 0.549 (0.007) & 0.512 (0.007) & 0.484 (0.007) & 0.546 (0.007) & \textbf{0.569} (0.007) \\
Regime Correct (WAIPW) & 0.516 (0.007) & 0.569 (0.007) & 0.548 (0.007) & 0.530 (0.007) & 0.519 (0.007) & 0.558 (0.007) & \textbf{0.582} (0.007) \\
%%%
 &  &  &  & & &  &  \\
$\mathcal{V}(\bd^8)$ MSE $\times 10^2$ (IPW) & 0.814 (0.017) & 0.556 (0.012) & \textbf{0.541} (0.012) & 0.586 (0.015) & 0.660 (0.020) & 0.837 (0.018) & 0.676 (0.014)  \\
$\mathcal{V}(\bd^8)$ MSE $\times 10^2$ (WIPW) & 1.185 (0.024) & 0.504 (0.011) & \textbf{0.480} (0.011) & 0.515 (0.014) & 0.517 (0.014) & 0.767 (0.016) & 0.636 (0.013)  \\
$\mathcal{V}(\bd^8)$ MSE $\times 10^2$ (AIPW) & 0.544 (0.011) & 0.396 (0.009) & \textbf{0.383} (0.008) & 0.401 (0.010) & 0.433 (0.013) & 0.521 (0.011) & 0.442 (0.009)  \\
$\mathcal{V}(\bd^8)$ MSE $\times 10^2$ (WAIPW) & 0.552 (0.012) & 0.377 (0.008) & \textbf{0.368} (0.008) & 0.381 (0.010) & 0.378 (0.010) & 0.501 (0.010) & 0.432 (0.009) \\
 &  &  &  & & &  &  \\
\underline{Coverage} &  & & & & & &  \\
$\mathcal{V}(\bd^8)$ 95\% CI (IPW) & 0.949 (0.003) & 0.940 (0.003) & 0.951 (0.003) & 0.953 (0.003) & 0.954 (0.003) & 0.949 (0.003) & 0.948 (0.003)  \\
$\mathcal{V}(\bd^8)$ 95\% CI (WIPW) & 0.943 (0.003) & 0.940 (0.003) & 0.951 (0.003) & 0.952 (0.003) & 0.954 (0.003) & 0.949 (0.003) & 0.947 (0.003) \\
$\mathcal{V}(\bd^8)$ 95\% CI (AIPW) & 0.944 (0.003) & 0.947 (0.003) & 0.950 (0.003) & 0.958 (0.003) & 0.954 (0.003) & 0.950 (0.003) & 0.952 (0.003) \\
$\mathcal{V}(\bd^8)$ 95\% CI (WAIPW) & 0.947 (0.003) & 0.942 (0.003) & 0.947 (0.003) & 0.951 (0.003) & 0.952 (0.003) & 0.948 (0.003) & 0.951 (0.003) \\
%%%%
 &  &  &  & & &  &  \\
$\mathcal{V}(\bd^8)$ 95\% LB (IPW) & 0.952 (0.003) & 0.948 (0.003) & 0.952 (0.003) & 0.955 (0.003) & 0.960 (0.003) & 0.940 (0.003) & 0.948 (0.003)  \\
$\mathcal{V}(\bd^8)$ 95\% LB (WIPW) & 0.946 (0.003) & 0.946 (0.003) & 0.948 (0.003) & 0.950 (0.003) & 0.954 (0.003) & 0.946 (0.003) & 0.949 (0.003) \\
$\mathcal{V}(\bd^8)$ 95\% LB (AIPW) & 0.944 (0.003) & 0.948 (0.003) & 0.948 (0.003) & 0.955 (0.003) & 0.957 (0.003) & 0.944 (0.003) & 0.949 (0.003) \\
$\mathcal{V}(\bd^8)$ 95\% LB (WAIPW) & 0.946 (0.003) & 0.946 (0.003) & 0.945 (0.003) & 0.947 (0.003) & 0.950 (0.003) & 0.946 (0.003) & 0.951 (0.003) \\
%%%%
 &  &  &  & & &  &  \\
$\mathcal{V}(\bd^8)$ 95\% UB (IPW) & 0.950 (0.003) & 0.945 (0.003) & 0.950 (0.003) & 0.950 (0.003) & 0.946 (0.003) & 0.952 (0.003) & 0.950 (0.003)  \\
$\mathcal{V}(\bd^8)$ 95\% UB (WIPW) & 0.946 (0.003) & 0.947 (0.003) & 0.951 (0.003) & 0.951 (0.003) & 0.952 (0.003) & 0.949 (0.003) & 0.948 (0.003) \\
$\mathcal{V}(\bd^8)$ 95\% UB (AIPW) & 0.952 (0.003) & 0.946 (0.003) & 0.951 (0.003) & 0.952 (0.003) & 0.952 (0.003) & 0.954 (0.003) & 0.951 (0.003) \\
$\mathcal{V}(\bd^8)$ 95\% UB (WAIPW) & 0.952 (0.003) & 0.949 (0.003) & 0.953 (0.003) & 0.957 (0.003) & 0.956 (0.003) & 0.952 (0.003) & 0.947 (0.003) \\
\hline
\end{tabular}
}
\end{center}
\end{table}

Figure~\ref{f:ffour} shows the distributions of the 5000 centered and
scaled final value estimates $\widehat{\theta}^j$, $j=1, 8$, based on
sequential RAR using TS with $c_t=1$ (plots for other methods and
regimes are similar). For $\mathcal{V}(\bd^8)$, the weighted
estimators are approximately normal while the unweighted estimators
are slightly left skewed; for $\mathcal{V}(\bd^1)$, the unweighted
estimators are nonnormal, and the weighted estimators are
improved, although the variance is larger than one. We attribute this
behavior to undersampling of $\mathbf{d}^1$, the least effective
regime; the issue is less pronounced for less aggressive
randomization.
Coverage of confidence intervals and bounds for $\calV(\bd^8)$ is
mildly improved for the weighted estimators. Because of the left skew,
there is notable upper confidence bound undercoverage and lower bound
overcoverage for the unweighted estimators. 

\begin{figure}
  \centering
\includegraphics[width=\textwidth]{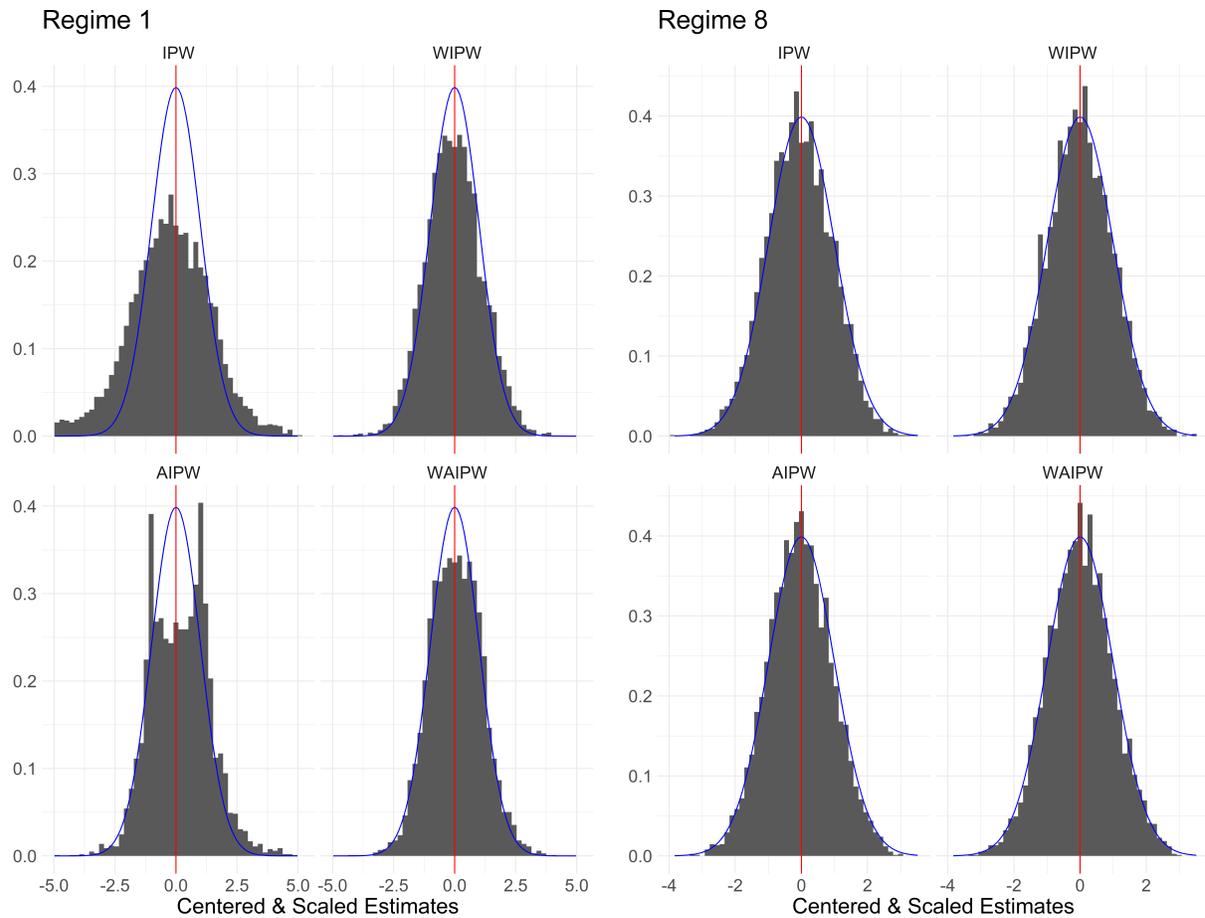}
  \caption{Monte Carlo distributions of centered and scaled estimates
    as in the theory of Section~\ref{s:theory} for
    selected estimators in the simulation in
    Section~\ref{ss:qlearnsims}.  The histograms correspond to the
    indicated estimator for $\mathcal{V}(\bd^1)$ and
    $\mathcal{V}(\bd^8)$ under sequential randomization using TS with $c_t=1$ for
    all $t$.  The vertical line indicates mean zero, and the density
    of a standard normal distribution is superimposed.}
  \label{f:ffour} 
\end{figure}

% Additional simulations including a low signal scenario and a binary
% outcome scenario can be found in the Supplemental Materials. The
% results are similar to those shown here.

%\section{Case study}
%\label{s:example}

\section{Discussion}
\label{s:discussion}

We have proposed methods for RAR in SMARTs using TS, where
randomization can be up-front to regimes embedded in the trial or
performed sequentially at each stage. Simulation studies demonstrate
the benefits over nonadaptive randomization: improved outcomes for
subjects in the trial, improved ability to identify an optimal regime,
and little or no effect on post-trial inference on embedded regimes.
Choice of damping constant can dramatically change the aggressiveness
of TS; thus, the specific features and goals of a SMART should be
considered when specifying this parameter. When randomization is
up-front, the choice of estimator on which to base TS has some effect
on performance; WAIPW and AIPW estimators are more aggressive than
WIPW and IPW estimators. SMART-AR methods also yield good in- and
post-trial performance; however, the tuning parameters are less
intuitive and effective. For any SMART, we recommend simulating
multiple different adaptive randomization methods to see which best
aligns with the trial goals.  

The weighted versions of the IPW and AIPW estimators are preferred
over the unweighted estimators for post-trial inference.  Normalized
weighted estimators have sampling distribution closer to standard
normal, and yield improved coverage of confidence intervals and bounds
and ability to identify optimal embedded regimes.  When baseline and
intermediate subject variables that are correlated with outcome are
available, we recommend the WAIPW estimator for post-trial inference
when using RAR of any type.

\backmatter

%\section*{Acknowledgements}

%\vspace*{-20pt}

\bibliographystyle{biom} \bibliography{biblio}

%  If your paper refers to supporting web material, then you MUST
%  include this section!!  See Instructions for Authors at the journal
%  website http://www.biometrics.tibs.org

\appendix

\setcounter{equation}{0}
\renewcommand{\theequation}{A.\arabic{equation}}

\section*{Appendix A: Interim AIPW Estimator}

A limitation of using the WIPW or WAIPW estimators as the basis for
updating randomization probabilities is that only the data on subjects
in $\Data_{t-1}$ who have completed the trial and have $Y$ observed
are used.  Thus, partial information from subjects still progressing
through the trial is ignored.  To make use of this information, we
adopt the Interim AIPW (IAIPW) estimator of \citet*{Cole2022}, which
is derived by viewing the partial information through the lens of
monotone coarsening \citep{tsiatis2006}.

We present the estimator in the case where $K=2$ and
$n_{t-1} = \sumiN \Gamma_{t-1,i}$ ; a general formulation is presented
in \citet{Cole2022}.  In this setting, the weighted IAIPW (WIAIPW)
estimator for $\theta^j = \calV(\bd^j)$ based on $\Data_{t-1}$ is
   \begin{align}
\widehat{\theta}_{t}^{WIAIPW,j} = n_{t-1}^{-1} \sumiN & \Gamma_{t-1,i} 
W_{\tau_i}^j \left[ 
\frac{\Delta_{t-1,i} C_{i}^j Y_i}{ \pi_{\tau_i,2}^j(\overline{\bX}_{2,i}) 
\pi_{\tau_i,1}^j (\bX_{1,i}) \hatnu_{\Delta,t-1}}  + 
\left\{ 1 - \frac{\overline{C}_{1,i}^j  I(\kappa_{t-1,i} = 2)}{ 
\pi_{\tau_i,1}^j
     (\bX_{1,i}) \hatnu_{2,t-1}}  \right\} L_1^{j}(\bX_{1,i}) \right.  
    \nonumber \\
   &+\left. \left\{ \frac{\overline{C}_{1,i}^j  
    I(\kappa_{t-1,i} = 2)}{ \pi_{\tau_i,1}^j (\bX_{1,i}) \hatnu_{2,t-1}} 
 - \frac{\Delta_{t-1,i} C_{i}^j }{ \pi_{\tau_i,2}^j(\Xbar_{2,i}) 
 \pi_{\tau_i,1}^j
      (\bX_{1,i}) \hatnu_{\Delta,t-1}} \right\} L^j_2(\Xbar_{2,i}) \right], \nonumber
\end{align}
where $\hatnu_{\Delta,t-1} = \sumiN \Delta_{t-1,i}/n_{t-1}$ and
$\hatnu_{2,t-1} = \sumiN \Gamma_{t-1,i} I(\kappa_{t-1,i}=2)/n_{t-1}$
are estimators for $P(\Delta_{t-1} = 1|\Gamma_{t-1}=1, \Data_{t-1})$
and $P(\kappa_{t-1}=2 | \Gamma_{t-1}=1, \Data_{t-1})$, $W_{\tau_i}^j$
is a weight, and $L^j_k(\xbar_k)$, $k=1,2$, are arbitrary functions of
$\xbar_k$.  The efficient choice for the functions
$L^j_k(\xbar_k),\,k=1,2$ are derived from the $Q$-functions which can
be modeled and estimated based on a backward iterative scheme similar
to that presented for the WAIPW estimator, with modifications to the
definition of the pseudo outcomes for subjects in $\Data_{t-1}$ who
have not yet completed the trial; see \citet{Cole2022}.  As for (1)
and (2), the sampling distribution of
$\hattheta^{WIAIPW}_t = (\widehat{\theta}_{t}^{WIAIPW,1},\ldots,
\widehat{\theta}_{t}^{WIAIPW,m})^T$ can be approximated using large
sample theory analogous to that presented in the following two
sections. The general form of the weights is complex and in
preliminary simulations, weighted versions of the estimators did not
outperform a version where the weights are one for all $\tau$. Due to
this, the simulations in Section~\ref{s:sims} only use the
unweighted IAIPW as a basis for randomization for
simplicity. Moreover, we did not find the added precision from using
partial information to have a notable impact in randomization.

\setcounter{equation}{0}
\renewcommand{\theequation}{B.\arabic{equation}}

\section*{Appendix B: Theory on Weighted Estimators}

%\subsection*{\textbf{\em Overview}}

\par\noindent\textbf{\em Overview}

\newcommand{\bZ}{\bm{Z}}
\newcommand{\bTheta}{\bm{\Theta}}
\newcommand{\rp}{\mathbb{R}^p}
\newcommand{\R}{\mathbb{R}}

We use the theory for Martingale estimating functions
\citep[][MEFs]{godambe_1985,Heyde_1997,Godambe2010QuasilikelihoodAO}
to characterize the asymptotic behavior or our post-trial estimator
under Thompson Sampling.  For completeness, we state and prove the
results we will use.

Suppose that we observe the first $T$ elements of a discrete-time
stochastic process, $\bZ_t,\ldots, \bZ_T$, taking values in
$\mathcal{Z}$.  In addition, suppose that the law of this stochastic
process is $P$ and our interest is in estimating
$\btheta^* = \btheta(P) \in \bTheta \subseteq \mathbb{R}^p$.  We
assume there exists an $\rp$-valued process on $\bTheta$ given by
\begin{equation*}
\mathcal{M}_T(\btheta) = \sum_{t=1}^T \bM_t(\btheta),
\end{equation*}
such that $\left\lbrace \bM_t(\btheta^*)\right\rbrace_{t\ge 1}$ is a 
Martingale difference sequence with respect to the filtration $\left\lbrace \mathcal{F}_t
\right\rbrace_{t\ge 1}$, i.e., $E\left\lbrace 
\bM_t(\btheta^*) \big|\mathcal{F}_{t-1}
\right\rbrace = 0$ with probability one for all $t\ge 1$.  We construct
an estimator $\widehat{\btheta}_T$ as a root of 
$\mathcal{M}_T(\btheta) = 0$; more generally, it could be that no
exact root exists for finite $T$ and thus we select $\widehat{\btheta}_T$
as a minimizer of $||\mathcal{M}_T(\btheta)||_2$, where $||\cdot||_2$ 
denotes the Euclidean norm.  

%\subsection{Consistency}
%\subsection*{\textbf{\em Consistency}}
\par\noindent\textbf{\em Consistency}

\newcommand{\calU}{\mathcal{M}_T}
\newcommand{\bthetaHatT}{\widehat{\btheta}_T}
\newcommand{\calUTheta}{\mathcal{M}_T(\btheta)}
\newcommand{\calUThetaStar}{\mathcal{M}_T(\btheta^*)}
\newcommand{\calUThetaBarT}{\mathcal{M}_T(\overline{\btheta}_T)}
\newcommand{\xiStarNegHalf}{\bxi_T^{-1/2}(\btheta^*)}
\newcommand{\xiStar}{\bxi_T(\btheta^*)}
\newcommand{\xiStarHalf}{\bxi_T^{1/2}(\btheta^*)}
\newcommand{\T}{^T}
\newcommand{\bxi}{\bm{\xi}}

We first consider conditions under which
$\widehat{\btheta}_T \inp \btheta^*$.  Define
$\bxi_T(\btheta) = \sum_{t=1}^T\mathrm{var}\left\lbrace
  \bM_t(\btheta)\big| \mathcal{F}_{t-1}\right\rbrace$ We will make use
of the following conditions.
\begin{itemize}
\item[(A0)] $\left\lbrace \bM_t(\btheta^*) \right\rbrace_{t\ge 1}$ is a 
Martingale difference sequence with respect to $\left\lbrace 
\mathcal{F}_t\right\rbrace_{t\ge 1}$.   
\item[(A1)] $\mathcal{M}_T(\btheta)$ is continuously differentiable
  for all $\btheta \in \bTheta$ and
  $\nabla_{\btheta} \mathcal{M}_T(\btheta)$ is invertible with
  probability one for all $\btheta\in\bTheta$ provided $T$ is
  sufficiently large.
\item[(A2)] The  minimum eigenvalue of $\bxi_T(\btheta^*)$ satisfies
$\lambda_{\min}\left\lbrace \bxi_T(\btheta^*) \right\rbrace 
\inp \infty,$ as $T\rightarrow \infty$. 
\item[(A3)] For any $\btheta \in \bTheta$ 
\begin{equation*}
  \sup_{\btheta\in\bTheta} \bigg|\bigg|
  \left\lbrace 
    \xiStarNegHalf \nabla_{\btheta}\mathcal{M}_t(\btheta) 
    \xiStarNegHalf 
  \right\rbrace^{-1}
  \bigg|\bigg| = O_P(1).   
\end{equation*}
\item[(A4)]  $\xiStarNegHalf \calUThetaStar = O_P(1)$.   
\end{itemize}

\begin{thm}
  Assume (A0)-(A4) and suppose that $\widehat{\btheta}_t$ satisfies
  $\mathcal{M}_t(\widehat{\btheta}_t) = o_P(1)$, then
  $\widehat{\btheta}_T \inp \btheta^*$ as $T\rightarrow \infty$.
\end{thm}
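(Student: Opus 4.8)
\emph{Proof sketch.} The plan is to linearise the estimating function about $\btheta^*$ and then stabilise the resulting identity with the normalising matrix $\xiStarHalf$, so that the divergence of the conditional-variance sum in (A2) supplies the shrinkage driving the estimation error to zero, while (A3) and (A4) keep the remaining factors bounded in probability.

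First I would use (A1). Take $\bTheta$ convex, so that the segment $u\mapsto \btheta^*+u(\widehat{\btheta}_T-\btheta^*)$ lies in $\bTheta$; then continuous differentiability of $\mathcal{M}_T$ and the fundamental theorem of calculus give
\begin{equation*}
\mathcal{M}_T(\widehat{\btheta}_T)-\mathcal{M}_T(\btheta^*)=\overline{\boldsymbol{J}}_T\,(\widehat{\btheta}_T-\btheta^*),
\end{equation*}
where $\overline{\boldsymbol{J}}_T:=\int_0^1 \nabla_{\btheta}\mathcal{M}_T\{\btheta^*+u(\widehat{\btheta}_T-\btheta^*)\}\,du$ is invertible for all large $T$ by (A1). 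Using the approximate-root condition $\mathcal{M}_T(\widehat{\btheta}_T)=o_P(1)$ and solving, $\widehat{\btheta}_T-\btheta^*=\overline{\boldsymbol{J}}_T^{-1}\{o_P(1)-\mathcal{M}_T(\btheta^*)\}$.

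Next I would write $\overline{\boldsymbol{J}}_T^{-1}=\xiStarNegHalf\{\xiStarNegHalf\overline{\boldsymbol{J}}_T\xiStarNegHalf\}^{-1}\xiStarNegHalf$, so that
\begin{equation*}
\widehat{\btheta}_T-\btheta^*=\xiStarNegHalf\,\big\{\xiStarNegHalf\,\overline{\boldsymbol{J}}_T\,\xiStarNegHalf\big\}^{-1}\,\xiStarNegHalf\big\{o_P(1)-\mathcal{M}_T(\btheta^*)\big\},
\end{equation*}
and bound the three groups of factors in turn. Because $\xiStar$ is symmetric positive definite, (A2) yields $\|\xiStarNegHalf\|=\lambda_{\min}\{\xiStar\}^{-1/2}\inp 0$, so the leading factor is $o_P(1)$ in operator norm; in the trailing factor, $\xiStarNegHalf\,o_P(1)=o_P(1)$ while $\xiStarNegHalf\,\mathcal{M}_T(\btheta^*)=O_P(1)$ by (A4), so it is $O_P(1)$; and the middle factor is $O_P(1)$ by (A3). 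Multiplying, $\widehat{\btheta}_T-\btheta^*=o_P(1)\cdot O_P(1)\cdot O_P(1)=o_P(1)$, which is the assertion. Here (A0) is used only to give $\xiStar$ its meaning as the predictable quadratic variation of the martingale formed by $\mathcal{M}_T(\btheta^*)$, which is what makes (A4) a natural requirement.

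I expect the delicate step to be the control of the middle factor. The multivariate expansion produces a Jacobian $\overline{\boldsymbol{J}}_T$ that is an \emph{average}, along the segment, of $\nabla_{\btheta}\mathcal{M}_T$ at varying parameter values, and the inverse of an average of matrices is not the average of the inverses; neither pointwise invertibility of $\nabla_{\btheta}\mathcal{M}_T$ from (A1) nor the uniform bound on $\{\xiStarNegHalf\nabla_{\btheta}\mathcal{M}_T(\btheta)\xiStarNegHalf\}^{-1}$ at each fixed $\btheta$ in (A3) transfers automatically to $\{\xiStarNegHalf\overline{\boldsymbol{J}}_T\xiStarNegHalf\}^{-1}$. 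Condition (A3) is phrased uniformly over $\bTheta$ precisely to handle this, but a fully rigorous argument needs a bit more: either (i) a localisation step showing first that $\widehat{\btheta}_T$ lies in a shrinking neighbourhood of $\btheta^*$ with probability tending to one---e.g.\ by noting $\|\xiStarNegHalf\mathcal{M}_T(\cdot)\|$ is bounded away from zero off such a neighbourhood while $\|\xiStarNegHalf\mathcal{M}_T(\widehat{\btheta}_T)\|=o_P(1)$---so that along the segment the rescaled Jacobian is close to its value at $\btheta^*$, or (ii) exploiting structure: in the SMART application of Section~\ref{s:theory} the estimating equations for $\calV(\bd^1),\dots,\calV(\bd^m)$ decouple across regimes, so $\nabla_{\btheta}\mathcal{M}_T$ is diagonal and the difficulty disappears. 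A secondary point to verify explicitly is that the $o_P(1)$ slack in the approximate-root condition is genuinely annihilated by the $\xiStarNegHalf$ prefactor---this, not any bound on $\mathcal{M}_T(\btheta^*)$ per se, is the sole role of the eigenvalue divergence in (A2).
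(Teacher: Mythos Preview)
Your proposal is correct and follows essentially the same route as the paper's proof: a first-order expansion about $\btheta^*$, sandwiching by $\xiStarNegHalf$, and then invoking (A2)--(A4) to control the three resulting factors. The only difference is cosmetic---the paper writes the expansion with a single intermediate point $\overline{\btheta}_T$ rather than your integral-averaged Jacobian---and your remark that the uniform bound in (A3) does not transfer automatically to $\{\xiStarNegHalf\,\overline{\boldsymbol{J}}_T\,\xiStarNegHalf\}^{-1}$ in fact flags a subtlety the paper's proof glosses over.
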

\begin{proof}
    Using a Taylor's expansion write
    \begin{equation*}
    \calU(\bthetaHatT) = \calU(\btheta^*) + 
    \nabla_{\btheta}\calUThetaBarT \left(\bthetaHatT - \btheta^*\right),
    \end{equation*}
    where $\overline{\btheta}_T$ is an intermediate point between
    $\bthetaHatT$ and $\btheta^*$.  Using (A1), we can re-arrange the
    above expression to obtain 
    \begin{eqnarray*}
    \bthetaHatT &=& \btheta^*  + \left\lbrace
    \nabla_{\btheta} \calUThetaBarT
    \right\rbrace^{-1}\left\lbrace 
    \calU(\bthetaHatT) - \calUThetaStar
    \right\rbrace \\[3pt]
    &=& \btheta^* + \left\lbrace
    \xiStarHalf \xiStarNegHalf \nabla_{\btheta}\calUThetaBarT \xiStarNegHalf
    \xiStarHalf 
    \right\rbrace^{-1} \left\lbrace
        \calU(\bthetaHatT) - \calUThetaBarT
    \right\rbrace\\[3pt]
    &=& \btheta^* + \xiStarNegHalf \left\lbrace
    \xiStarNegHalf \nabla_{\btheta} \calUThetaBarT \xiStarNegHalf
    \right\rbrace^{-1} 
    \xiStarNegHalf \left\lbrace
        \calU(\bthetaHatT) - \calUThetaStar
    \right\rbrace.
    \end{eqnarray*}
    From the minimum eigenvalue condition $||\xiStarNegHalf||_F = o_P(1)$,
    where $||\cdot||_F$ is the Frobenius norm. Furthermore, from (A3) it 
    follows that 
    \begin{equation*}
    \big|\big|\left\lbrace
    \xiStarNegHalf \nabla_{\btheta} \calUThetaBarT \xiStarNegHalf
    \right\rbrace^{-1}\big|\big| = O_P(1).  
    \end{equation*}
    Finally, we see that $\xiStarNegHalf\calU(\bthetaHatT)$ is the product
    of $o_P(1)$ terms and thus is $o_p(1)$, and
    $\xiStarNegHalf \calUThetaStar = O_P(1)$ from (A4). Thus, we have shown
    $\bthetaHatT = \btheta^* + o_P(1)$.  
\end{proof}

\par\noindent\textbf{\em Asymptotic Normality}
%\subsection{Asymptotic normality}

\newcommand{\bsigma}{\bm{\sigma}}
\newcommand{\bSigma}{\bm{\Sigma}}

To establish normality we replace (A4) with 
\begin{itemize}
\item[(A5)] Let $\bI_p$ denote the $p\times p$ identity matrix,
  $\xiStarNegHalf \mathcal{M}_T(\btheta^*) \inD \mathcal{N}(0,
  \bI_p)$.
\end{itemize}
In addition, we assume the following conditions.
\begin{itemize}
    \item[(A6)] For any consistent estimator 
    $\widetilde{\btheta}_T \inp \btheta^*$  
    \begin{equation*}
    \bigg|\bigg| 
    \xiStarNegHalf \left\lbrace
        \nabla_{\btheta}\calU(\widetilde{\btheta}_T) - 
        \nabla_{\btheta} \calUThetaStar 
    \right\rbrace
    \xiStarNegHalf
    \bigg|\bigg| \inp 0,
    \end{equation*}
    as $T\rightarrow \infty$.  
    \item[(A7)] There exists fixed and positive definite matrix
    $\bSigma$ such that 
    \begin{equation*}
     -\xiStarNegHalf \nabla_{\btheta} \calUThetaStar \xiStarNegHalf 
     \inp \bSigma,
    \end{equation*}
    as $T\rightarrow \infty$.
    \item[(A8)]  $\bthetaHatT$ satisfies $\xiStarNegHalf \calU(\bthetaHatT)
    \inp 0$ as $T\rightarrow \infty$.   
\end{itemize}

\begin{thm}
    Assume (A0)-(A3) and(A5)-(A8), then
    $\bSigma \xiStarHalf\left(\bthetaHatT - \btheta^*
    \right)\inD \mathcal{N}(0, \bI_p)$ as
    $T\rightarrow \infty$.  
\end{thm}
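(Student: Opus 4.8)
The argument I have in mind follows the route of the proof of Theorem~1: Taylor-expand the martingale estimating function $\calU$ about $\btheta^*$, rearrange to isolate the scaled error $\xiStarHalf(\bthetaHatT-\btheta^*)$, and then feed in the new hypotheses (A5)--(A8) together with Slutsky's theorem. I would first observe that $\bthetaHatT\inp\btheta^*$ under the present assumptions too: (A5) implies (A4), since a sequence converging in distribution to a tight limit is $O_P(1)$, and (A8) provides $\xiStarNegHalf\calU(\bthetaHatT)=o_P(1)$, which together with $\|\xiStarNegHalf\|=o_P(1)$ from (A2) are the only facts about $\bthetaHatT$ that the proof of Theorem~1 uses beyond (A0)--(A3). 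Hence the conclusion of Theorem~1 holds here; in particular, any point on the segment joining $\bthetaHatT$ to $\btheta^*$ is itself consistent for $\btheta^*$.

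The core step is the expansion. Using the continuous differentiability from (A1), I would expand $\calU$ coordinatewise about $\btheta^*$, so that for each coordinate $i$ one has $\{\calU(\bthetaHatT)\}_i=\{\calUThetaStar\}_i+\{\nabla_{\btheta}\calU(\overline{\btheta}_T^{(i)})\}_i(\bthetaHatT-\btheta^*)$ with each $\overline{\btheta}_T^{(i)}$ on the segment between $\bthetaHatT$ and $\btheta^*$ (one may equivalently use the integral form of the remainder). Assembling the rows into a Jacobian $\nabla_{\btheta}\calUThetaBarT$, solving for the error, pre-multiplying by $\xiStarNegHalf$, and inserting $\xiStarNegHalf\xiStarHalf=\bI_p$ just before $(\bthetaHatT-\btheta^*)$ gives
\[
-\left(\xiStarNegHalf\,\nabla_{\btheta}\calUThetaBarT\,\xiStarNegHalf\right)\xiStarHalf(\bthetaHatT-\btheta^*)=\xiStarNegHalf\calUThetaStar-\xiStarNegHalf\calU(\bthetaHatT).
\]

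Next I would identify the limits of the two sides. On the right, the first term converges in distribution to $\calN(0,\bI_p)$ by (A5) and the second converges in probability to $0$ by (A8), so by Slutsky the whole right-hand side converges in distribution to $\calN(0,\bI_p)$. For the matrix on the left I would split $\xiStarNegHalf\nabla_{\btheta}\calUThetaBarT\xiStarNegHalf=\xiStarNegHalf\nabla_{\btheta}\calUThetaStar\xiStarNegHalf+\xiStarNegHalf\bigl(\nabla_{\btheta}\calUThetaBarT-\nabla_{\btheta}\calUThetaStar\bigr)\xiStarNegHalf$, apply (A7) to the first summand and (A6) to the second (used for each consistent sequence $\overline{\btheta}_T^{(i)}$), and conclude $-\xiStarNegHalf\nabla_{\btheta}\calUThetaBarT\xiStarNegHalf\inp\bSigma$; as $\bSigma$ is positive definite, this matrix is invertible with probability tending to one and its inverse converges in probability to $\bSigma^{-1}$. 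Pre-multiplying the displayed identity by $\bSigma\bigl(-\xiStarNegHalf\nabla_{\btheta}\calUThetaBarT\xiStarNegHalf\bigr)^{-1}$, whose leading factor converges in probability to $\bSigma\bSigma^{-1}=\bI_p$, and invoking Slutsky once more then yields $\bSigma\,\xiStarHalf(\bthetaHatT-\btheta^*)\inD\calN(0,\bI_p)$, as claimed.

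The step I expect to be the real obstacle, as in most M-estimation arguments, is making the multivariate mean-value expansion rigorous: the naive single-intermediate-point version fails for vector-valued maps, so one must argue row by row (or via the integral form of the remainder) and rely on (A6) being stated for \emph{any} consistent estimator, which is exactly what lets it be applied to each $\overline{\btheta}_T^{(i)}$ separately. The rest --- the minor bookkeeping that legitimizes the expansion, and the repeated appeals to Slutsky's theorem and to the continuous mapping theorem for matrix inversion near the fixed nonsingular $\bSigma$ --- is routine.
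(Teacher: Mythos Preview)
Your proposal is correct and follows essentially the same route as the paper's own proof: Taylor-expand $\calU$ about $\btheta^*$, sandwich the Jacobian with $\xiStarNegHalf$, use (A6)--(A7) to identify the scaled Jacobian with $\bSigma$, and combine (A5), (A8), and Slutsky to obtain the limit. If anything, you are more careful than the paper in two places---explicitly noting that (A5) implies (A4) so that consistency (and hence applicability of (A6) at the mean-value points) is available, and flagging the coordinatewise mean-value issue that the paper's single-$\overline{\btheta}_T$ expansion glosses over.
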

\begin{proof}
    As in our proof of consistency, we begin with a Taylor series expansion
    \begin{equation*}
    \calU(\bthetaHatT) = \calUThetaStar + \nabla_{\btheta}\calUThetaBarT
    \left(
        \bthetaHatT - \btheta^* 
    \right),
    \end{equation*}
    where $\overline{\btheta}_T$ is an intermediate point between
    $\bthetaHatT$ and $\btheta^*$.  re-arranging terms yields 
    \begin{eqnarray*}
        (\bthetaHatT - \btheta^*) &=& 
        \left\lbrace
            \nabla_{\btheta} \calUThetaBarT
        \right\rbrace^{-1} \left\lbrace
            \calU(\bthetaHatT) - \calUThetaStar 
        \right\rbrace  \\[5pt]
        \Leftrightarrow \xiStarHalf (\bthetaHatT - \btheta^*)
        &=& \xiStarHalf \left\lbrace
            \xiStarHalf \xiStarNegHalf 
            \nabla_{\btheta} \calUThetaBarT
            \xiStarNegHalf \xiStarHalf
        \right\rbrace^{-1} \\
      &\times& \left\lbrace
            \calU(\bthetaHatT) - \calUThetaStar 
        \right\rbrace\\[5pt]
        &=& \left\lbrace
            \xiStarNegHalf \nabla_{\btheta} \calUThetaBarT
            \xiStarNegHalf 
        \right\rbrace
        \xiStarNegHalf \left\lbrace
            \calU(\bthetaHatT) - \calUThetaStar
        \right\rbrace \\[5pt]
        &=& \left\lbrace
        -\xiStarNegHalf \nabla_{\btheta} \calUThetaBarT \xiStarNegHalf
        \right\rbrace^{-1} \xiStarNegHalf \calUThetaStar \\[3pt]
        && \quad + \left\lbrace
            \xiStarNegHalf \nabla_{\btheta} \calUThetaBarT \xiStarNegHalf
        \right\rbrace^{-1} \xiStarNegHalf \calU(\bthetaHatT) \\[5pt]
        &=& \bZ_T +\bE_T. 
    \end{eqnarray*}
    We show that $\bZ_T\inD \mathcal{N}(0,\bSigma^{-1}\bSigma^{-T})$
    and that $\bE_T = o_P(1)$, where $\bSigma^{-T}$ is the transpose of $\bSigma^{-1}$.  To obtain the limit for $\bZ_T$  we write
    \begin{align*}
    \bZ_T &= \left\lbrace
        -\xiStarNegHalf \nabla_{\btheta} \calUThetaBarT \xiStarNegHalf
    \right\rbrace^{-1} \xiStarNegHalf\calUThetaStar \\
    &= \left[
    -\xiStarNegHalf \nabla_{\btheta} \calUThetaStar \xiStarNegHalf
    + \xiStarNegHalf \left\lbrace
    \nabla_{\btheta} \calUThetaStar - \nabla_{\btheta} \calUThetaBarT
    \right\rbrace \xiStarNegHalf 
    \right]^{-1}   \\
 &\times\xiStarNegHalf\calUThetaStar,
    \end{align*}
    applying (A6), it follows that the term inside the square brackets 
    is $\bSigma^{-1} + o_P(1)$ so that
    \begin{equation*}
    \bZ_T = \bSigma^{-1}\xiStarHalf \calUThetaStar + o_P(1) 
    \inD \mathcal{N}(0, \bSigma^{-1}\bSigma^{-T}). 
    \end{equation*}
    It remains to show that $\bE_T = o_P(1)$.  From (A3) 
    $\left\lbrace \xiStarNegHalf \nabla_{\btheta}\calUThetaBarT \xiStarNegHalf \right\rbrace^{-1} = O_P(1)$.  Furthermore,
    $\xiStarNegHalf\calU(\widehat{\btheta}_{T}) = o_P(1)$ from (A8).
    This proves the result.  
\end{proof}

\subsection*{\textbf{\em Choosing the Weights}}

\par\noindent\textbf{\em WIPW Estimator After Up-Front Randomization}

Consider the WIPW estimator for the value $\theta^j$ of $j$th embedded
regime in  (\ref{eq:IPWrand}) at the end of the trial under the conditions of
Section~\ref{s:theory},
so that subjects are indexed by $t$.  We require
that the weights $W^j_t$ be chosen as functions of $\Data_{t-1}$ so
that (i) the estimating equations are conditionally unbiased,
$E\{ M_t^j (\Xbar_{K,t},\Abar_{K,t},Y_t; \theta^j) | \Data_{t-1}\} =
0$, and (ii) the variance is stabilized,
$E[ \{M_t^j (\Xbar_{K,t},\Abar_{K,t},Y_t; \theta^j)\}^2 |
\Data_{t-1}]$ $= \sigma > 0$ for all $t$, where
$M_t^j (\Xbar_{K,t},\Abar_{K,t},Y_t; \theta^j)$ is the estimating
function  (\ref{eq:MestIPW}) given by
$$M_t^j (\Xbar_{K,t},\Abar_{K,t},Y_t; \theta^j) = \frac{ W^j_t
  C^j_t}{ \{ \prod_{k=2}^K \pi_{t,k}^j(\Xbar_{k,t})\}
  \pi_{t,1}^j(\bX_{1,t}) } (Y_t - \theta^j).$$
As in Section~\ref{s:theory},
take $\Xbar_k^*(\bd^j) = \{\bX_1, \bX_2^*(\bd^j), \ldots, \bX_k^*(\bd^j)\}$, $k=1,\ldots,K$. 
% and write for brevity 
% $\pi_t^{*,j}\{ \Xbar_K^*(\bd^j) \} = [ \prod_{k=2}^K
% \pi_{t,k}^j\{\Xbar_k^*(\bd^j)\} ] \pi_{t,1}^j(\bX_1)$.

To show (i),
when $C^j_t=1$, the consistency assumption given in Section~\ref{ss:SMART}
implies that $Y_t = Y^*(\bd^j)$ and 
$\Xbar_{K,t} = \Xbar_K^*(\bd^j)$.  Thus, if $W^j_t$ is a
function of $\Data_{t-1}$, 
\begin{align*}
  E&\{ M_t^j (\Xbar_{K,t},\Abar_{K,t},Y_t; \theta^j)  | \Data_{t-1}\}\\
   &=  W^j_t E \left[ \left. \frac{ C^j_t \{Y^*(\bd^j) - \theta^j\}}
     {\prod_{k=2}^K\pi_{t,k}^j\{\Xbar_k^*(\bd^j)\}  \pi_{t,1}^j(\bX_1)} \right|   \Data_{t-1}\right] \\
   &= W^j_t E \left( \left. E\left[ \left. \frac{ C^j_t  \{Y^*(\bd^j)
     - \theta^j\} }
     { \prod_{k=2}^K\pi_{t,k}^j\{\Xbar_k^*(\bd^j)\}  \pi_{t,1}^j(\bX_1)}\right| \mathcal{W}^*,
                             \Data_{t-1} \right] \right| \Data_{t-1}\right) \\
    &= W^j_t  E \left[ \left. \frac{ E(C^j_t| \mathcal{W}^*,
      \Data_{t-1})}
      {  \prod_{k=2}^K\pi_{t,k}^j\{\Xbar_k^*(\bd^j)\}  \pi_{t,1}^j(\bX_1) } \{Y^*(\bd^j) - \theta^j\} \right|
      \Data_{t-1}\right]\\*[0.1in]
   &= W^j_t  E \left[ \left. \frac{\prod_{k=2}^K\pi_{t,k}^j\{\Xbar_k^*(\bd^j)\}  \pi_{t,1}^j(\bX_1)}
     {  \prod_{k=2}^K\pi_{t,k}^j\{\Xbar_k^*(\bd^j)\}  \pi_{t,1}^j(\bX_1) } \{Y^*(\bd^j) - \theta^j\} \right|
      \Data_{t-1}\right]\\
      &= W^j_t E\{ Y^*(\bd^j) - \theta^j \} = 0,
  \end{align*}
  where we have used the result analogous to that in Section~6.4.3 of
  \citet{tsiatis2020dynamic} that, under the consistency, positivity,
  and sequential ignorability assumptions, 
  $E(C^j_t| \mathcal{W}^*, \Data_{t-1}) =  \prod_{k=2}^K
\pi_{t,k}^j\{\Xbar_k^*(\bd^j)\}  \pi_{t,1}^j(\bX_1)$ and the fact that $\mathcal{W}^*$ and thus $Y^*(\bd^j)$
  is independent of $\Data_{t-1}$. 

  To find $W^j_t$ satisfying (ii), we have, using the consistency assumption,
  $(C^j_t)^2 = C^j_t$, and that $E(C^j_t| \mathcal{W}^*, \Data_{t-1}) =  \prod_{k=2}^K
\pi_{t,k}^j\{\Xbar_k^*(\bd^j)\}  \pi_{t,1}^j(\bX_1)$, that
  \begin{align}
E&\{ M_t^j (\Xbar_{K,t},\Abar_{K,t},Y_t; \theta^j)^2  |  \Data_{t-1}\}
   =  (W^j_{t} )^2 E \left( \left. \frac{ C^j_t \{Y^*(\bd^j) -\theta^j\}^2 }
   { \big[\prod_{k=2}^K\pi_{t,k}^j\{\Xbar_k^*(\bd^j)\}  \pi_{t,1}^j(\bX_1)\big]^2 } \right|      \Data_{t-1}\right) \nonumber\\
 &= (W^j_t)^2  E \left( \left. \frac{ E(C^j_t| \mathcal{W}^*,
   \Data_{t-1})}
   { \big[ \prod_{k=2}^K\pi_{t,k}^j\{\Xbar_k^*(\bd^j)\}  \pi_{t,1}^j(\bX_1)\big]^2 } \{Y^*(\bd^j) - \theta^j\}^2 \right|   \Data_{t-1}\right) \nonumber\\
    &= (W^j_t)^2  E \left[ \left. \frac{\{Y^*(\bd^j) - \theta^j\}^2}
      { \prod_{k=2}^K\pi_{t,k}^j\{\Xbar_k^*(\bd^j)\}  \pi_{t,1}^j(\bX_1)  }\right|
       \Data_{t-1}\right] \label{eq:varrequire}
\end{align}    
as in \ref{eq:Mvar}.

To ensure (ii), we thus want to choose $W^j_t$ so that
(\ref{eq:varrequire}) is a constant that can be $j$-specific but that
does not depend on $\Data_{t-1}$.  For definiteness and simplicity,
consider $K=2$; the following argument extends to general $K$.  As in
Section~\ref{s:theory}
for the cancer pain management SMART, randomization
at stage 1 ordinarily does not depend on covariate information, which
implies that $\pi_{t,1}^j(\bX_1) = \pi^j_{t,1}$ depending on
$\Data_{t-1}$ and $j$ but not $X_1$.  Randomization at stages beyond
stage 1 is typically to sets of feasible treatment options within each
level of a  binary variable such as
response status, and depends on no other covariate information.  Let $R^*(\bd^j)$ be the potential binary response status
variable taking on values 0 and 1 for a subject whose first stage
treatment assignment is consistent with regime $\bd^j$, where
$R^*(\bd^j)$ is a function of or component of $\Xbar_2^*(\bd^j)$
($R^*(\bd^j) = X_{22}^*(\bd^j) $ in the cancer pain management SMART).
By the consistency assumption, the observed response status $R$ is equal
to $R^*(\bd^j)$ when $C^j_t=1$.  Then, as in the cancer pain management
SMART, it follows that
$\pi_{t,2}^j\{\Xbar_2^*(\bd^j)\} = \pi^{j(s)}_{t,2}$ when
$R^*(\bd^j)=s$, $s=0,1$, where for each $r$, $\pi^{j(s)}_{t,2}$ depends
on $\Data_{t-1}$ and $j$ but not on $\Xbar_2^*(\bd^j)$, so that
\begin{equation}
\pi_{t,2}^j\{\Xbar_2^*(\bd^j)\} = I\{ R^*(\bd^j)=0\} \pi^{j(0)}_{t,2}
+  I\{ R^*(\bd^j)=1\}\pi^{j(1)}_{t,2}.
\label{eq:piXbar}
\end{equation}
Because $\mathcal{W}^* \independent\Data_{t-1}$, generalizing
(\ref{eq:regime1var}), it follows that (\ref{eq:varrequire}) is equal to
\begin{equation}
(W^j_{t} )^2 \left( \frac{ \mu^{j(1)}} {\pi^{j(1)}_{t,2}\pi_{t, 1}^j} + \frac{ \mu^{j(0)}} {\pi^{j
      (0)}_{t,2}\pi_{t, 1}^j} \right), 
\label{eq:constantvar}
  \end{equation}
where $\mu^{j(s)} = E[I\{R^*(\bd^j) = s\} \{ Y^*(\bd^j) - \theta^j\}^2]$, $s = 0, 1$.
Thus, we wish to choose $W^j_{t}$ so that (\ref{eq:constantvar}) is
equal to a constant depending only on $j$ and not on $\Data_{t-1}$.
Denoting this constant as $\Xi^j$, the weights should be chosen so
that
\begin{equation}
W^j_{t} = (\Xi^j)^{1/2} \left( \frac{ \mu^{j(1)}} {\pi^{j(1)}_{t,2}\pi_{t, 1}^j} + \frac{ \mu^{j(0)}} {\pi^{j
      (0)}_{t,2}\pi_{t, 1}^j} \right)^{-1/2}.  
\label{eq:weightsj}
  \end{equation}

  Of course, in practice the constant $\Xi^j$ is unknown.  If there is
  a burn-in period of length $t^*$ weeks during which
  nonadaptive randomization takes place with $W^j_t \equiv 1$,
  $t \leq t^*$, from (\ref{eq:constantvar}), it must be the case that $\Xi^j$ satisfies
  \begin{equation}
\Xi^j = \frac{ \mu^{j(1)}}{\pi^{j(1)}_{t^*,2}\pi_{t^*, 1}^j} + \frac{ \mu^{j(0)}} {\pi^{j(0)}_{t^*,2}\pi_{t^*, 1}^j},
\label{eq:Xij}
\end{equation}
where the randomization probabilities during the burn-in period are fixed, so
that $\pi^{j(s)}_{t^*,2}$, $s=0,1,$ and $\pi_{t^*, 1}^j$ are known constants not
depending on the data.  Thus, if $\mu^{j(s)}$, $s=0, 1$, were known,
so that $\Xi^j$ is known, the weights for $t \geq t^* + 1$ should be
chosen as in (\ref{eq:weightsj}) with $\Xi^j$ as in (\ref{eq:Xij}).

These observations suggest our proposed scheme for implementation in
practice.  We propose to obtain weights in this spirit by estimating
$\Xi^j$ by an estimator $\widehat{\Xi}_{t^*}^j$, say, based on the
data from a burn-in period of length $t^*$ weeks during which
nonadaptive randomization takes place with $W^j_t \equiv 1$,
$t \leq t^*$, and then treat the estimate as fixed and known
henceforth.  Specifically, let $\hatmu^{j(s)}_{t}$ be estimators for
$\mu^{j(s)}$, $s=0, 1$, at any time $t$ based on $\Data_{t-1}$, as in
the case of the cancer pain management SMART (in the original
notation) in Section~\ref{s:theory}; with the indexing scheme here (by $t$),
$\hatmu^{j(s)}_{t} = t^{-1} \sum_{u=1}^t  I(R_u = s) C^j_u
(Y_u-\widetilde{\theta}^j_{t})^2/( \pi^{j(s)}_{u,2} \pi^j_{u,1})$,
$s=0,1$, where $\widetilde{\theta}^j_{t}$ is an estimator for
$\theta^j$ using $\Data_{t-1}$ (e.g., the unweighted IPW estimator).  Define
  $$\widehat{\Xi}^j_{t} = \frac{ \hatmu^{j(1)}_{t}} {\pi^{j(1)}_{t,2}\pi_{t, 1}^j} + \frac{ \hatmu^{j(0)}_{t}} {\pi^{j
      (0)}_{t,2}\pi_{t, 1}^j}.$$
  Then (\ref{eq:Xij}) implies that the estimator
  $\widehat{\Xi}_{t^*,j}$ for $\Xi^j$ based on the burn-in data is
  $$\widehat{\Xi}^j_{t^*} =
\frac{ \hatmu^{j(1)}_{t^*}} {\pi^{j(1)}_{t^*,2}\pi_{t^*, 1}^j} + \frac{ \hatmu^{j(0)}_{t^*}} {\pi^{j
      (0)}_{t^*,2}\pi_{t^*, 1}^j},$$
 where as above $\pi^{j(s)}_{t^*,2}$, $s=0, 1$, and $\pi_{t^*, 1}^j$ are known constants not
  depending on the data.   Treating $\widehat{\Xi}^j_{t^*}$ as fixed and known, for $t \geq
  t^*+1$, from (\ref{eq:weightsj}), take $W^j_t =
  (\widehat{\Xi}^j_{t^*}/\widehat{\Xi}^j_{t})^{1/2}$.  Because
  $\widehat{\Xi}^j_{t^*}$ depends only on $\Data_{t^*} \subseteq
  \Data_{t-1}$ for $t \geq t^*+1$, and $\widehat{\Xi}^j_{t}$ depends only
    on $\Data_{t-1}$, the weights  $W^j_t$ for $t\geq t^*+1$ depend
    on $\Data_{t-1}$, as required.  

\par\noindent\textbf{\em WIPW Estimator After Sequential Randomization}  
 
Again consider $K=2$ for definiteness.  The preceding scheme for
obtaining weights assumes that up-front randomization is used, so
that, returning to the original notation, for a subject who entered
the SMART at time $\tau$, the probabilities in the denominator of
(1) % (\ref{eq:IPWrand})
for that subject's contribution 
are determined at that point based on $\Data_{\tau-1}$.  Thus, if
up-front randomization was used, the estimators for $\mu^{(s)}_j$,
$s=0,1$, and the corresponding weight $W^j_\tau$ for a such a subject
depend on $\pi^{j(s)}_{\tau,2}$, $s=0,1$, and $\pi_{\tau, 1}^j$.

If the sequential randomization scheme is used, this calculation
should be modified.  In obtaining the WIPW estimator for $\theta^j$ at
the end of the trial, for a subject who entered the SMART at time
$\tau$ and then reached stage 2 at time $\tau+v$, say, the
probabilities in the denominator of (1) % (\ref{eq:IPWrand})
should be $\pi_{\tau, 1}^j$ and $\pi^{j(s)}_{\tau+v,2}$, $s=0,1$.
Likewise, the estimators for $\mu^{(s)}_j$, $s=0,1$, and the
corresponding weight should be calculated based on
$\pi^{j(s)}_{\tau+v,2}$, $s=0,1$, and $\pi_{\tau, 1}^j$, so that the
weight for this subject depends on $\Data_{\tau+v-1}$ and thus can be
written as $W^j_{\tau+v}$.  When $\tau+v < t^*$,
$W^j_{\tau+v} = 1$.

\par\noindent\textbf{\em WAIPW Estimator After Up-Front Randomization}

Analogous to the developments for the WIPW estimator for $\theta^j$
for embedded regime $j$, using the indexing scheme (by $t$) in Section~\ref{s:theory},
we identify the corresponding estimating function
$M_t^j (\Xbar_{K,t},\Abar_{K,t},Y_t; \theta^j)$, demonstrate that
(i) $E\{ M_t^j (\Xbar_{K,t},\Abar_{K,t},Y_t; \theta^j) | \Data_{t-1}\} =
0$, and determine weights $W^{A,j}_t$ so that (ii) the variance
$E[ \{M_t^j (\Xbar_{K,t},\Abar_{K,t},Y_t; \theta^j)\}^2 |
\Data_{t-1}]$ is a constant depending only on $j$. We demonstrate for
$K=2$; the argument extends to general $K$.

From \ref{eq:AIPWrand}, for $K=2$ it is straightforward that
\begin{align*}
  M_t^j& (\Xbar_{2,t},\Abar_{2,t},Y_t; \theta^j) 
  = W^{A,j}_t \left[ \frac{C^j_t
           Y_t}{\pi^j_{t,2}(\Xbar_{2,t})\pi^j_{t,1}(\bX_{1,t})} -\left\{
           \frac{\overline{C}^j_{1,t}}{\pi^j_{t,1}(\bX_{1,t})} -1 \right\}    \hatQ^j_{1,t}(\bX_{1,t}) \right.\\
   &-\left. \left\{ \frac{C^j}{\pi^j_{t,2}(\Xbar_{2,t})\pi^j_{t,1}(\bX_{1,t}1)} -\frac{\overline{C}^j_{1,t}}{\pi^j_{t,1}(\bX_{1,t})} \right\}
     \hatQ^j_{2,t}(\Xbar_{2,t}) - \theta^j\right],
  \end{align*}
  where $\overline{C}^j_{1,t}  = I\{A_{1,t} =
  d^j_1(\bX_{1,t})\}$, and for brevity we write
  $\hatQ^j_{2,t}(\Xbar_{2,t}) = Q_2\{ \Xbar_{2,t}, \dbar_2^j(\Xbar_{2,t};
  \hatbeta_{2,t}\}$ and
  $\hatQ^j_{1,t}(\bX_{1,t}) = Q^j_1\{\bX_{1,t}, d^j_1(\bX_{1,t});
  \hatbeta^j_{1,t}\}$, which are fitted using $\Data_{t-1}$.  To show
  (i), using the consistency assumption and rearranging, 
  similar to the argument for the WIPW estimator, if  $W^{A,j}_t$ is a
function of $\Data_{t-1}$, 
  \begin{align}
  E&\{  M_t^j (\Xbar_{2,t},\Abar_{2,t},Y_t; \theta^j) |\Data_{t-1}\}
     =  W^{A,j}_t \left[ \vphantom{\frac{C^j_t}{\pi_{t,2}^j\{\Xbar_2^*(\bd^j)\}\pi_{t,1}^j(\bX_1)}}
    E\big[ \{ Y^*(\bd^j) - \theta^j\} | \Data_{t-1}\big] \right. \label{eq:term1} \\
  & + E\left(\left. \left[ \frac{\overline{C}^j_{1,t}}{\pi^j_{t,1}(\bX_1)} -1\right]
    \{Y^*(\bd^j) - \hatQ^j_{1,t}(\bX_1) \} \right| \Data_{t-1}\right)\label{eq:term2}\\
     &+ \left.\left(\left.\left[ \frac{C^j_t}{\pi_{t,2}^j\{\Xbar_2^*(\bd^j)\}\pi_{t,1}^j(\bX_1)} -
       \frac{\overline{C}^j_{1,t}}{\pi^j_{t,1}(\bX_1)}\right]
       [Y^*(\bd^j) - \hatQ^j_{2,t}\{\Xbar_k^*(\bd^j)\} ]\right| \Data_{t-1}\right) \right].\label{eq:term3}
  \end{align}
  Using $\mathcal{W}^* \independent\Data_{t-1}$, it follows that
  $E[ \{ Y^*(\bd^j) - \theta^j\} | \Data_{t-1}] = E[ \{ Y^*(\bd^j) -
  \theta^j\} ] = 0$, so that the conditional expectation in (\ref{eq:term1}) is
  zero.  Using arguments similar to those for the WIPW estimator, the
  conditional expectations in (\ref{eq:term2}) and (\ref{eq:term3}) can also be
  shown to be equal to zero using
  $E(C^j_t | \mathcal{W}^* ,\Data_{t-1}) =
  \pi_{t,2}^j\{\Xbar_2^*(\bd^j)\}\pi_{t,1}^j(\bX_1)$ and
  $E(\overline{C}^j_{1,t} | \mathcal{W}^*,\Data_{t-1}) =
\pi_{t,1}^j(\bX_1)$.  Thus, (i) holds.

To find $E[ \{M_t^j (\Xbar_{K,t},\Abar_{K,t},Y_t; \theta^j)\}^2 |
\Data_{t-1}]$, first note that the conditional (on
$\Data_{t-1}$) expectations of crossproduct terms in $\{M_t^j
(\Xbar_{K,t},\Abar_{K,t},Y_t; \theta^j)\}^2$ are equal to zero by
similar arguments.  For example, it is straightforward that
\begin{align*}
 (W^{A,j}_t)^2 E\left( \left.\left[ \frac{\overline{C}^j_{1,t}}{\pi^j_{t,1}(\bX_1)} -1\right]
 \{ Y^*(\bd^j) - \theta^j\}   \{Y^*(\bd^j) - \hatQ^j_{1,t}(\bX_1) \} \right| \Data_{t-1}\right) = 0
  \end{align*}
using $\mathcal{W}^* \independent\Data_{t-1}$ and  $E(\overline{C}^j_{1,t} | \mathcal{W}^*,\Data_{t-1}) =
\pi_{t,1}^j(\bX_1)$.  Thus,
\begin{align}
E&[ \{M_t^j (\Xbar_{K,t},\Abar_{K,t},Y_t; \theta^j)\}^2 |
\Data_{t-1}] = (W^{A,j}_t)^2 \left[ \vphantom{\frac{C^j_t}{\pi_{t,2}^j\{\Xbar_2^*(\bd^j)\}\pi_{t,1}^j(\bX_1)}}
    E[ \{ Y^*(\bd^j) - \theta^j\}^2 | \Data_{t-1}] \right. \label{eq:term1var} \\
  & + E\left(\left. \left[ \frac{\overline{C}^j_{1,t}}{\pi^j_{t,1}(\bX_1)} -1\right]^2
    \{Y^*(\bd^j) - \hatQ^j_{1,t}(\bX_1) \}^2 \right| \Data_{t-1}\right)\label{eq:term2var}\\
     &+ \left.\left(\left.\left[ \frac{C^j_t}{\pi_{t,2}^j\{\Xbar_2^*(\bd^j)\}\pi_{t,1}^j(\bX_1)} -
       \frac{\overline{C}^j_{1,t}}{\pi^j_{t,1}(\bX_1)}\right]^2
       [Y^*(\bd^j) - \hatQ^j_{2,t}\{\Xbar_k^*(\bd^j)\} ]^2\right| \Data_{t-1}\right) \right].\label{eq:term3var}
  \end{align}
  In (\ref{eq:term1var}), using $\mathcal{W}^*
  \independent\Data_{t-1}$, $E[ \{ Y^*(\bd^j) - \theta^j\}^2 |
\Data_{t-1}] = E[ \{ Y^*(\bd^j) - \theta^j\}^2]$.  Again using $\mathcal{W}^* \independent\Data_{t-1}$
and the fact that $E(\overline{C}^j_{1,t} | \mathcal{W}^*,\Data_{t-1}) =
\pi_{t,1}^j(\bX_1)$ implies that 
$$E\left[\left\{ \left.\frac{\overline{C}^j_{1,t}}{\pi^j_{t,1}(\bX_1)}
    -1\right\}^2 \right| \mathcal{W}^*, \Data_{t-1}\right]  = \frac{1 -
  \pi^j_{t,1}(\bX_1)}{\pi^j_{t,1}(\bX_1)},$$
so that the conditional expectation in (\ref{eq:term2var}) is
$$E\left[ \left.\{Y^*(\bd^j) - \hatQ^j_{1,t}(\bX_1) \}^2 \left\{\frac{1 -
  \pi^j_{t,1}(\bX_1)}{\pi^j_{t,1}(\bX_1)} \right\}\right| \Data_{t-1}\right].$$
Finally, using $E(C^j_t | \mathcal{W}^* ,\Data_{t-1}) =
  \pi_{t,2}^j\{\Xbar_2^*(\bd^j)\}\pi_{t,1}^j(\bX_1)$ and
  $E(\overline{C}^j_{1,t} | \mathcal{W}^*,\Data_{t-1}) =
  \pi_{t,1}^j(\bX_1)$, it is straightforward to derive that
  $$E \left(\left.\left[ \frac{C^j_t}{\pi_{t,2}^j\{\Xbar_2^*(\bd^j)\}\pi_{t,1}^j(\bX_1)} -
       \frac{\overline{C}^j_{1,t}}{\pi^j_{t,1}(\bX_1)}\right]^2
   \right| \mathcal{W}^*,\Data_{t-1} \right) = \frac{
   1-\pi_{t,2}^j\{\Xbar_2^*(\bd^j)\} }{\pi_{t,2}^j\{\Xbar_2^*(\bd^j)\}\pi^j_{t,1}(\bX_1)}.$$
Using this result, the conditional expectation in (\ref{eq:term3var})
is
$$E\left( \left. [Y^*(\bd^j) - \hatQ^j_{2,t}\{\Xbar_k^*(\bd^j)\} ]^2 \frac{
   1-\pi_{t,2}^j\{\Xbar_2^*(\bd^j)\}
 }{\pi_{t,2}^j\{\Xbar_2^*(\bd^j)\}\pi^j_{t,1}(\bX_1)} \right| \Data_{t-1}\right).$$
Substituting the foregoing results in  (\ref{eq:term1var}),
(\ref{eq:term2var}), and  (\ref{eq:term3var}) yields
 \begin{align}
 E&[ \{M_t^j (\Xbar_{K,t},\Abar_{K,t},Y_t; \theta^j)\}^2 |
 \Data_{t-1}] = (W^{A,j}_t)^2 \left\{ \vphantom{\frac{C^j_t}{\pi_{t,2}^j\{\Xbar_2^*(\bd^j)\}\pi_{t,1}^j(\bX_1)}}
    E\big[ \{ Y^*(\bd^j) - \theta^j\}^2\big] \right. \nonumber \\
   &+ E\left[ \left.\{Y^*(\bd^j) - \hatQ^j_{1,t}(\bX_1) \}^2 \left\{\frac{1 -
   \pi^j_{t,1}(\bX_1)}{\pi^j_{t,1}(\bX_1)} \right\}\right|
     \Data_{t-1}\right] \label{eq:terms}\\
 &+\left.E\left( \left. [Y^*(\bd^j) - \hatQ^j_{2,t}\{\Xbar_k^*(\bd^j)\} ]^2 \left[\frac{
    1-\pi_{t,2}^j\{\Xbar_2^*(\bd^j)\}
  }{\pi_{t,2}^j\{\Xbar_2^*(\bd^j)\}\pi^j_{t,1}(\bX_1)} \right]\right|
   \Data_{t-1}\right)\right\}.  \nonumber 
   \end{align}

Define $\pi^j_{t,1}$ and $\pi^{j(s)}_{t,2}$, $s=0,1$, which depend
only on $\Data_{t-1}$ and $j$, and $R^*(\bd^j)$ as for the WIPW estimator, 
so that
$\pi_{t,2}^j\{\Xbar_2^*(\bd^j)\} = I\{ R^*(\bd^j)=0\} \pi^{j(0)}_{t,2}
+  I\{ R^*(\bd^j)=1\}\pi^{j(1)}_{t,2}$ as in (\ref{eq:piXbar}).  Then
(\ref{eq:terms}) can be rewritten as
\begin{align}
 (W^{A,j}_t)^2 &\left\{ \vphantom{\frac{C^j_t}{\pi_{t,2}^j\{\Xbar_2^*(\bd^j)\}\pi_{t,1}^j(\bX_1)}}
    E\big[ \{ Y^*(\bd^j) - \theta^j\}\big]^2 \right. \label{eq:term1varnew} \\
               &+ \left(\frac{1 -\pi^j_{t,1}}{\pi^j_{t,1}} \right)
                 E\left[ \left.\{Y^*(\bd^j) - \hatQ^j_{1,t}(\bX_1) \}^2 \right| \Data_{t-1}\right]   \label{eq:term2varnew}\\
               &+\left( \frac{1-\pi^{j(0)}_{t,2} }{\pi^{j(0)}_{t,2}\pi^j_{t,1}} \right)
                 E\left( \left. I\{R^*(\bd^j)=0\}  [Y^*(\bd^j) - \hatQ^j_{2,t}\{\Xbar_k^*(\bd^j)\} ]^2 \right|
                 \Data_{t-1}\right)   \label{eq:term30varnew} \\
            &+\left. \left( \frac{1-\pi^{j(1)}_{t,2} }{\pi^{j(1)}_{t,2}\pi^j_{t,1}} \right)
                 E\left( \left. I\{R^*(\bd^j)=1\}  [Y^*(\bd^j) - \hatQ^j_{2,t}\{\Xbar_k^*(\bd^j)\} ]^2 \right|
   \Data_{t-1}\right)   \right\}.  \label{eq:term31varnew} 
   \end{align}
Write the expectations in
(\ref{eq:term1varnew})--(\ref{eq:term31varnew}) respectively as  $\nu^j
   = E[ \{ Y^*(\bd^j) - \theta^j\}]^2$, $\nu^j_{1} = E[ \{Y^*(\bd^j) - \hatQ^j_{1,t}(\bX_1) \}^2 | \Data_{t-1}]$, and 
$$\nu^{j(s)}_{2} = E \left(\left. I\{R^*(\bd^j)=1\}  [Y^*(\bd^j) -
  \hatQ^j_{2,t}\{\Xbar_k^*(\bd^j)\} ]^2 \right| \Data_{t-1}\right), \,\,\,
s=0,1,$$
so that the variance (\ref{eq:terms}) can be written succinctly as 
\begin{align}
 (W^{A,j}_t)^2 \left\{ \nu^j + \nu^j_{1} \left(\frac{1
  -\pi^j_{t,1}}{\pi^j_{t,1}} \right) + \nu^{j(0)}_{2} \left(
  \frac{1-\pi^{j(0)}_{t,2} }{\pi^{j(0)}_{t,2}\pi^j_{t,1}} \right)
  + \nu^{j(1)}_{2} \left(
  \frac{1-\pi^{j(1)}_{t,2} }{\pi^{j(1)}_{t,2}\pi^j_{t,1}} \right) \right\}.
  \label{eq:waipwvar}
\end{align}
Thus, as for the WIPW estimator, we wish to choose $W^{A,j}_t$  so
that (\ref{eq:waipwvar}) is a constant depending on $j$ but not
$\Data_{t-1}$.  Denoting this constant as $\Xi^{A,j}$, the weights
should be chosen so that
 \begin{equation}
W^{A,j}_t = (\Xi^{A,j})^{1/2} \left\{ \nu_j + \nu^j_{j} \left(\frac{1
  -\pi^j_{t,1}}{\pi^j_{t,1}} \right) + \nu^{j(0)}_{2} \left(
  \frac{1-\pi^{j(0)}_{t,2} }{\pi^{j(0)}_{t,2}\pi^j_{t,1}} \right)
  + \nu^{j(1)}_{2} \left(
  \frac{1-\pi^{j(1)}_{t,2} }{\pi^{j(1)}_{t,2}\pi^j_{t,1}} \right)
\right\}^{-1/2}.
\label{eq:waipwweight}
   \end{equation}
With a burn-in period of length $t^*$ weeks of nonadaptive
randomization with $W^{A,j}_t \equiv 1$, from (\ref{eq:waipwweight}),
it must be that
\begin{equation}
\Xi^{A,j} = \nu^j + \nu^j_{1} \left(\frac{1
  -\pi^j_{t^*,1}}{\pi^j_{t^*,1}} \right) + \nu^{j(0)}_{2} \left(
  \frac{1-\pi^{j(0)}_{t^*,2} }{\pi^{j(0)}_{t^*,2}\pi^j_{t^*,1}} \right)
  + \nu^{j(1)}_{2} \left(
  \frac{1-\pi^{j(1)}_{t^*,2} }{\pi^{j(1)}_{t,2}\pi^j_{t^*,1}} \right),
\label{eq:XiAij}
\end{equation}
where as before the randomization probabilities during the burn-in period are fixed, so
that $\pi^{j(s)}_{t^*,2}$, $s=0,1,$ and $\pi_{t^*, 1}^j$ are known constants not
depending on the data.  Thus, if $\nu^j$, $\nu^j_{1}$, and
$\nu^{j(s)}_{2}$, $s=0, 1$, were known,
so that $\Xi^{A,j}$ is known, the weights for $t \geq t^* + 1$ should be
chosen as in (\ref{eq:waipwweight}) with $\Xi^{A,j}$ as in (\ref{eq:XiAij}).  

Analogous to the approach for the WIPW estimator, the proposed
implementation scheme is as follows.  Obtain weights by estimating
$\Xi^{A,j}$ by an estimator $\widehat{\Xi}^{A,j}_{t^*}$ based on the data
from a burn-in period of length $t^*$ weeks with nonadaptive
randomization with $W^{A,j}_t \equiv 1$, $t \leq t^*$, and then going
forward treat the estimate as fixed and known.  Letting 
$\widehat{\nu}^j_{t}$, $\widehat{\nu}^j_{t,1}$, and
$\widehat{\nu}^{j(s)}_{t,2}$, $s=0, 1$, be estimators for 
$\nu^j$, $\nu^j_{1}$, and
$\nu^{j(s)}_{2}$, $s=0, 1$, at any time $t$ based on $\Data_{t-1}$,
define
\begin{equation}
\widehat{\Xi}^{A,j}_{t} = \widehat{\nu}^j_{t} + \widehat{\nu}^j_{t,1} \left(\frac{1
  -\pi^j_{t,1}}{\pi^j_{t,1}} \right) + \widehat{\nu}^{j(0)}_{t,2} \left(
  \frac{1-\pi^{j(0)}_{t,2} }{\pi^{j(0)}_{t^*,2}\pi^j_{t,1}} \right)
  + \widehat{\nu}^{j(1)}_{t,2} \left(
    \frac{1-\pi^{j(1)}_{t,2} }{\pi^{j(1)}_{t,2}\pi^j_{t,1}} \right).
\label{eq:hatXiAij}
  \end{equation}
Then (\ref{eq:XiAij}) implies that the estimator
$\widehat{\Xi}^{A,j}_{t^*}$ for $\Xi^{A,j}$ based on the burn-in data is
$$\widehat{\Xi}^{A,j}_{t^*} = \widehat{\nu}^j_{t^*} + \nu^j_{1} \left(\frac{1
  -\pi^j_{t^*,1}}{\pi^j_{t^*,1}} \right) + \widehat{\nu}^{j(0)}_{t^*,2} \left(
  \frac{1-\pi^{j(0)}_{t^*,2} }{\pi^{j(0)}_{t^*,2}\pi^j_{t^*,1}} \right)
  + \widehat{\nu}^{j(1)}_{t^*,2} \left(
  \frac{1-\pi^{j(1)}_{t^*,2} }{\pi^{j(1)}_{t,2}\pi^j_{t^*,1}} \right).$$
Then as for the WIPW estimator, treating $\widehat{\Xi}^{A,j}_{t^*}$ as
fixed and known, for $t \geq t^*+1$, from (\ref{eq:waipwweight}), 
 take $W^{A,j}_t =
  (\widehat{\Xi}^{A,j}_{t^*}/\widehat{\Xi}^{A,j}_{t})^{1/2}$.  Because
  $\widehat{\Xi}^{A,j}_{t^*}$ depends only on $\Data_{t^*} \subseteq
  \Data_{t-1}$ for $t \geq t^*+1$, and $\widehat{\Xi}^{A,j}_{t}$ depends only
    on $\Data_{t-1}$, the resulting weights  $W^{A,j}_t$ for $t\geq t^*+1$ depend
    on $\Data_{t-1}$.  

    Estimators for estimators for $\nu^j$, $\nu^j_{1}$, and
    $\nu^{j(s)}_{2}$, $s=0, 1$, at any time $t$ based on $\Data_{t-1}$
    can be constructed using inverse probability weighting.  With the
    current indexing scheme and with $R$ the observed response status,
    estimators can be obtained as
    $$\widehat{\nu}^j_{t} = t^{-1} \sum_{u=1}^t \frac{C^j_u
    (Y_u-\widetilde{\theta}_{t,j})^2}{ I(R_u = 0) \pi^{j(0)}_{u,2}
    \pi^j_{u,1} + I(R_u = 1) \pi^{j(1)}_{u,2} \pi^j_{u,1}},$$
$$\widehat{\nu}^j_{t,1} = t^{-1} \sum_{u=1}^t \frac{C^j_u\{ Y_u - \hatQ^j_{1,t}(\bX_{1,u}) \}^2}
{ I(R_u = 0) \pi^{j(0)}_{u,2} \pi^j_{u,1} + I(R_u = 1)
  \pi^{j(1)}_{u,2} \pi^j_{u,1}},$$ 
$$\widehat{\nu}^{j(s)}_{t,2}  = t^{-1} \sum_{u=1}^t \frac{C^j_u
  I(R_u=s) \{ Y_u - \hatQ^j_{2,t}(\Xbar_{2,u}) \}^2}{ \pi^{j(s)}_{u,2}
  \pi^j_{u,1}}, \,\,\, s=0, 1.$$
where $\widetilde{\theta}^j_{t}$ is an estimator for
$\theta^j$ using $\Data_{t-1}$ (e.g., the unweighted IPW estimator).
In the original notation, estimators can be obtained at week $t$ as
$$\widehat{\nu}^j_{t} = N^{-1}_t \sumiN \frac{\Delta_{t-1,i} C^j_i
    (Y_i-\widetilde{\theta}^j_{t})^2}{  I(R_i = 0) \pi^{j(0)}_{\tau_i,2}
    \pi^j_{\tau_i,1} + I(R_i = 1) \pi^{j(1)}_{\tau_i,2} \pi^j_{\tau_i,1}},$$
$$\widehat{\nu}^j_{t,1} =  N^{-1}_t \sumiN \frac{ \Delta_{t-1,i}C^j_i\{ Y_i - \hatQ^j_{1,t}(\bX_{1,i}) \}^2}
{  I(R_i = 0) \pi^{j(0)}_{\tau_i,2}
    \pi^j_{\tau_i,1} + I(R_i = 1) \pi^{j(1)}_{\tau_i,2} \pi^j_{\tau_i,1}},$$ 
$$\widehat{\nu}^{j(s)}_{t,2}  = N^{-1}_t \sumiN \frac{ \Delta_{t-1,i} C^j_i
  I(R_i=r) \{ Y_i - \hatQ^j_{2,t}(\Xbar_{2,i}) \}^2}{ \pi^{j(s)}_{\tau_i,2}
  \pi^j_{\tau_i,1}}, \,\,\, s=0, 1.$$

\par\noindent\textbf{\em WAIPW After Sequential Randomization}  

The considerations for obtaining the WAIPW estimator at the end of the
trial following sequential randomization are analogous to those for
WIPW.  Again take $K=2$, and consider a subject who entered the SMART
at time $\tau$ and then reached stage 2 at time $\tau+v$.  As for the
WIPW estimator, that denominators in (2) %(\ref{eq:AIPWrand})
should be constructed using $ \pi^j_{\tau_i,1}$ and
$\pi^{j(s)}_{\tau_i+v,2}$, $s=0, 1$.  Likewise, in forming the weight
for such a subject, in (\ref{eq:hatXiAij}) use $ \pi^j_{\tau_i,1}$ and
$\pi^{j(s)}_{\tau_i+v,2}$, $s=0, 1$, and take
$$\widehat{\nu}^j_{t} = N^{-1}_t \sumiN \frac{\Delta_{t-1,i} C^j_i
    (Y_i-\widetilde{\theta}^j_{t})^2}{  I(R_i = 0) \pi^{j(0)}_{\tau_i+v,2}
    \pi^j_{\tau_i,1} + I(R_i = 1) \pi^{j(1)}_{\tau_i+v,2} \pi^j_{\tau_i,1}},$$
$$\widehat{\nu}^j_{t,1} =  N^{-1}_t \sumiN \frac{ \Delta_{t-1,i}C^j_i\{ Y_i - \hatQ^j_{1,\tau_i+v}(\bX_{1,i}) \}^2}
{  I(R_i = 0) \pi^{j(0)}_{\tau_i+v,2}
    \pi^j_{\tau_i,1} + I(R_i = 1) \pi^{j(1)}_{\tau_i+v,2} \pi^j_{\tau_i,1}},$$ 
$$\widehat{\nu}^{j(s)}_{t,2}  = N^{-1}_t \sumiN \frac{ \Delta_{t-1,i} C^j_i
  I(R_i=r) \{ Y_i - \hatQ^j_{2,\tau_i+v}(\Xbar_{2,i}) \}^2}{ \pi^{j(s)}_{\tau_i+v,2}
  \pi^j_{i,1}}, \,\,\, s=0, 1.$$

\par\noindent\textbf{\em Verifying Assumptions for the WIPW and WAIPW Estimators}

\newcommand{\calUThetanb}{\mathcal{M}_T(\theta)}
\newcommand{\calUThetaStarnb}{\mathcal{M}_T(\theta^*)}
\newcommand{\calUThetaBarTnb}{\mathcal{M}_T(\overline{\theta}_T)}
\newcommand{\xiStarNegHalfnb}{\xi_T^{-1/2}(\theta^*)}
\newcommand{\xiStarnb}{\bxi_T(\theta^*)}
\newcommand{\xiStarHalfnb}{\xi_T^{1/2}(\theta^*)}

First, denote the following estimating functions corresponding to the
WIPW and WAIPW estimators, respectively:

\begin{align*}
M_t^{WIPW,j}& (\Xbar_{K,t},\Abar_{K,t},Y_t; \theta^j) = \frac{ W^j_t
  C^j_t}{ \{ \prod_{k=2}^K \pi_{t,k}^j(\Xbar_{k,t})\}
  \pi_{t,1}^j(\bX_{1,t}) } (Y_t - \theta^j), \\
  M_t^{WAIPW,j}& (\Xbar_{2,t},\Abar_{2,t},Y_t; \theta^j) 
  = W^{A,j}_t \left[ \frac{C^j_t
           Y_t}{\pi^j_{t,2}(\Xbar_{2,t})\pi^j_{t,1}(\bX_{1,t})} -\left\{
           \frac{\overline{C}^j_{1,t}}{\pi^j_{t,1}(\bX_{1,t})} -1 \right\}    \hatQ^j_{1,t}(\bX_{1,t}) \right.\\
   &-\left. \left\{ \frac{C^j}{\pi^j_{t,2}(\Xbar_{2,t})\pi^j_{t,1}(\bX_{1,t}1)} -\frac{\overline{C}^j_{1,t}}{\pi^j_{t,1}(\bX_{1,t})} \right\}
     \hatQ^j_{2,t}(\Xbar_{2,t}) - \theta^j\right].
  \end{align*}
\noindent
Note that these are estimating $\theta^j \in \R$, so $p=1$. To reduce
the notational burden, we use the short-hand notation
$M_t^{WIPW,j}(\theta^j)$ and $M_t^{WAIPW,j}(\theta^j)$. Additionally,
denote
$\mathcal{M}_t^{WIPW,j}(\theta^j) = \sum_{t=1}^T
M_t^{WIPW,j}(\theta^j)$ and
$\mathcal{M}_t^{WAIPW,j}(\theta^j) = \sum_{t=1}^T
M_t^{WAIPW,j}(\theta^j)$. We introduce the notation $\theta^{j*}$ to
denote the true parameter, whereas $\theta^j$ for a generic value.

We now consider assumptions (A0)-(A8) in Appendix B and demonstrate that each assumption holds for the foregoing estimating functions. 

For the first assumption (A0), we verify that
$\{\mathcal{M}^{WIPW,j}_t(\theta^{j*})\}_{t\geq1}$ and
$\{\mathcal{M}^{WAIPW,j}_t(\theta^{j*})\}_{t\geq1}$ are a Martingale
difference sequences with respect to $\{ \mathcal{F}_{t}\}_{t\geq1}$
where $\mathcal{F}_{t}=\Data_t$. As shown in previous sections,
$E\left\{ M_t^{WIPW}(\theta^{j*}) \mid \Data_{t-1} \right\} = 0$ and
$E\left\{ M_t^{WAIPW}(\theta^{j*}) \mid \Data_{t-1} \right\} = 0$,
which implies that both are Martingale difference sequences.

Assumption (A1) is that $\mathcal{M}(\theta)$ is continuously differentiable with respect to $\theta$ is invertible. We trivially assume that $C_t^j \neq 0$ for all $t$. Consider the derivatives:
\begin{align*}
\nabla_{\theta}\mathcal{M}_{T}^{WIPW,j}(\theta^j) &= -\sum_{t=1}^T \frac{ W^j_t
  C^j_t}{ \{ \prod_{k=2}^K \pi_{t,k}^j(\Xbar_{k,t})\}
  \pi_{t,1}^j(\bX_{1,t}) }, \\
\nabla_{\theta} \mathcal{M}_{T}^{WIPW,j}(\theta^j) &= -\sum_{t=1}^T W_t^{A,j}.
\end{align*}

The positivity assumption implies
$\{ \prod_{k=2}^K \pi_{t,k}^j(\Xbar_{k,t})\} \pi_{t,1}^j(\bX_{1,t})>0$
for all $t$. By construction, $W_t^j\neq0$ and $W_t^{A,j}\neq 0$ for
all $t$ (e.g., burn-in period), which verifies (A1).

Assumption (A2) states that
$\lambda_{min} \left[ \sum_{t=1}^T \mbox{var}\{ M_t(\theta^*) \mid
  \mathcal{F}_{t-1} \} \right] \longrightarrow \infty$. Because we are
estimating a scalar $\theta$, this reduces to
$\sum_{t=1}^T \mbox{var}\{ M_t(\theta^*) \mid \mathcal{F}_{t-1} \}
\longrightarrow \infty$. Because of how we choose the weights,
$E \left[ \{ M^{WIPW,j}_t(\theta^{j*})\}^2 \mid \Data_{t-1} \right] =
\sigma > 0$ and likewise for
$E\left[ \{ M^{WAIPW,j}_t(\theta^{j*}) \}^2 \mid \Data_{t-1}
\right]$. Hence, (A2) holds for both estimators. As mentioned in
Section~\ref{ss:thompson}, a clipping constant may be necessary to
ensure the positivity assumption. The positivity assumption implies
$E(C_t^j \mid \mathcal{W}^*,\Data_{t-1}) > 0$ with probability one for
all $t$, which is necessary for (A2) to hold.

Assumption (A3) states:
\begin{equation*}
  \sup_{\theta\in\bTheta} \bigg|\bigg|
  \left\lbrace 
    \xiStarNegHalfnb \nabla_{\theta}\mathcal{M}_T(\theta) 
    \xiStarNegHalfnb 
  \right\rbrace^{-1}
  \bigg|\bigg| = O_P(1).   
\end{equation*}
\noindent

Let
$I^{WIPW,j}_t(\epsilon) = I( W^j_t C^j_t / \{ \prod_{k=2}^K
\pi_{t,k}^j(\Xbar_{k,t}) \}\pi_{t,1}^j(\bX_{1,t}) > \epsilon)$. With
the positivity assumption, the trivial assumption that $C_j^t \neq 0$
for all $t$, and because $W_t^j>0$ as long as
$Y_t\neq \widetilde{\theta}^j$, which occurs infinitely often in
non-degenerate distributions on $Y$, we can use the filtration
Borel-Cantelli Lemma to determine there exists a $\theta>0, \tau>0$
such that $P\{ I^{WIPW,j}(\epsilon) \mid \Data_{t-1} \} > \tau$. This
implies
$-\sum_{t=1}^T W^j_t C^j_t / \{ \prod_{k=2}^K \pi_{t,k}^j(\Xbar_{k,t})
\} \pi_{t,1}^j(\bX_{1,t}) = O_P(T)$. Define
$I^{WAIPW,j}(\epsilon) = I(W^{A,j}_t > \epsilon)$ and take similar
steps to determine $-\sum_{t=1}^T W^{A,j}_t = O_P(T)$.

it holds that $W^j_t C^j_t / \{ \prod_{k=2}^K \pi_{t,k}^j(\Xbar_{k,t}) \}\pi_{t,1}^j(\bX_{1,t}) > 0$ and $W_t^{A,j} > 0$ occur with non-zero probability. 

For the WIPW estimator,
\begin{align*}
\sup_{\theta\in\bTheta} \bigg|\bigg|
    \left\lbrace 
    \xiStarNegHalfnb \nabla_{\theta}\mathcal{M}_t(\theta) 
    \xiStarNegHalfnb 
    \right\rbrace^{-1}
\bigg|\bigg| &= \bigg|\bigg| \frac{\sum_{t=1}^T E\left[ \{ M^{WIPW,j}_t(\theta^{j*}) \}^2 \mid \Data_{t-1} \right]}{ -\sum_{t=1}^T W^j_t
  C^j_t / \{ \prod_{k=2}^K \pi_{t,k}^j(\Xbar_{k,t}) \}
  \pi_{t,1}^j(\bX_{1,t}) }  \bigg|\bigg| \\
&= \bigg|\bigg| \frac{T\sigma}{O_P(T) }  \bigg|\bigg| \\
&= \bigg|\bigg| \frac{O_P(T)}{O_P(T) }  \bigg|\bigg| \\
&= O_P(1),
\end{align*}
\noindent 
and similarly, for the WAIPW estimator,
\begin{align*}
\sup_{\theta\in\bTheta} \bigg|\bigg|
    \left\lbrace 
    \xiStarNegHalfnb \nabla_{\theta}\mathcal{M}_t(\theta) 
    \xiStarNegHalfnb 
    \right\rbrace^{-1}
\bigg|\bigg| &= \bigg|\bigg| \frac{\sum_{t=1}^T E\left[ \left\{ M^{WAIPW,j}_t(\theta^{j*}) \right\}^2 \mid \Data_{t-1} \right]}{ -\sum_{i=1} W^{A,j}_t }  \bigg|\bigg| \\
&= \bigg|\bigg| \frac{T\sigma}{O_P(T)} \bigg|\bigg| \\
&= \bigg|\bigg| \frac{O_P(T)}{O_P(T) }  \bigg|\bigg| \\
&= O_P(1).
\end{align*}

Assumption (A4) states: $\xiStarNegHalfnb \calUThetaStarnb = O_P(1)$.      

For the WIPW estimator
\begin{align*}
\xiStarNegHalfnb \calUThetaStarnb &= \frac{\sum_{t=1}^T M^{WIPW,j}_t(\theta^{j*}) }{ \left( \sum_{t=1}^T E\left[ \left\{ M^{WIPW,j}_t(\theta^{j*}) \right\}^2 \mid \Data_{t-1} \right] \right)^{1/2} } \\
&= \frac{T^{-1} \sum_{t=1}^T M^{WIPW,j}_t(\theta^{j*}) }{T^{-1} \left( \sum_{t=1}^T E\left[ \left\{ M^{WIPW,j}_t(\theta^{j*}) \right\}^2 \mid \Data_{t-1} \right] \right)^{1/2} } \\
&= \frac{T^{-1} \sum_{t=1}^T M^{WIPW,j}_t(\theta^{j*}) }{\sigma^{1/2} } \\
&= O_P(1),
\end{align*}
\noindent
because $T^{-1} \sum_{t=1}^T M^{WIPW,j}_t(\theta^{j*}) \inp 0$, a
result of the strong law of large numbers for Martingale difference
sequences \citep{slln_mds}.

Similarly, for the WAIPW estimator,
\begin{align*}
\xiStarNegHalfnb \calUThetaStarnb &= \frac{\sum_{t=1}^T M^{WAIPW,j}_t(\theta^{j*}) }{ \left( \sum_{t=1}^T E\left[ \{ M^{WIPW,j}_t(\theta^j) \}^2 \mid \Data_{t-1} \right] \right)^{1/2} } \\
&= \frac{T^{-1}\sum_{t=1}^T M^{WAIPW,j}_t(\theta^{j*}) }{ T^{-1} \left( \sum_{t=1}^T E\left[ \{ M^{WIPW,j}_t(\theta^j) \}^2 \mid \Data_{t-1} \right] \right)^{1/2} } \\
&= \frac{T^{-1}\sum_{t=1}^T M^{WAIPW,j}_t(\theta^{j*}) }{ \sigma^{1/2} } \\
&= O_P(1).
\end{align*}

Assumption (A5) states
$\xiStarNegHalfnb \mathcal{M}_T(\theta^*) \inD \mathcal{N}(0,
1)$. Because $M_t^{WIPW,j}(\theta^{j*})$ and
$M_t^{WAIPW,j}(\theta^{j*})$ are Martingale difference sequences (A0),
and
$\Sigma_{t=1}^T E\left[ \{ M_t^{WIPW,j}(\theta^{j*}) \}^2 \mid
  \Data_{t-1} \right] \longrightarrow \infty$ with probability one,
and likewise for WAIPW (A2) the Martingale central limit theorem
\citep{Hall_Heyde_1980} guarantees:
\begin{align*}
& \frac{\sum_{t=1}^T M_t^{WIPW,j}(\theta^{j*})}{ \left( \sum_{t=1}^T E\left[ \{ M^{WIPW,j}_t(\theta^j) \}^2 \mid \Data_{t-1} \right] \right)^{1/2}}  \inD \mathcal{N}(0, 1), \\
& \frac{\sum_{t=1}^T M_t^{WAIPW,j}(\theta^{j*})}{\left( \sum_{t=1}^T E\left[ \{ M^{WAIPW,j}_t(\theta^j) \}^2 \mid \Data_{t-1} \right] \right)^{1/2}} \inD \mathcal{N}(0, 1). 
\end{align*}

Assumption (A6) states for any consistent estimator: $\widetilde{\theta}_T \inp \theta^*$  
 \begin{equation*}
    \bigg|\bigg| 
    \xiStarNegHalfnb \left\lbrace
        \nabla_{\theta}\calU(\widetilde{\theta}_T) - 
        \nabla_{\theta} \calUThetaStarnb 
    \right\rbrace
    \xiStarNegHalfnb
    \bigg|\bigg| \inp 0.
 \end{equation*}

For both WIPW and WAIPW, $\nabla_{\theta}\calU(\widetilde{\theta}^j_T) = \nabla_{\theta}\calU(\theta^{j*})$, so this holds.

Assumption (A7) states that there exists fixed and positive value $\sigma'$ such that 
 \begin{equation*}
     -\xiStarNegHalfnb \nabla_{\theta} \calUThetaStarnb \xiStarNegHalfnb 
     \inp \sigma',
  \end{equation*}

\noindent
Due to the Martingale convergence theorem, there exists some $\tau < \infty$ such that the following hold: $T^{-1}\sum_{t=1}^T E\left[ \{ M^{WIPW,j}_t(\theta^{j*}) \}^2 \mid \Data_{t-1} \right] \inp \sigma$ and $T^{-1} \sum_{t=1}^T W^{A,j}_t \inp \tau$.

For the WIPW estimator,
\begin{align*}
-\xiStarNegHalfnb \nabla_{\theta} \calUThetaStarnb \xiStarNegHalfnb &= \frac{T^{-1} \sum_{t=1}^T W^j_t
  C^j_t / \{ \prod_{k=2}^K \pi_{t,k}^j(\Xbar_{k,t}) \}
  \pi_{t,1}^j(\bX_{1,t})}{T^{-1}\sum_{t=1}^T E\left[ \{ M^{WIPW,j}_t(\theta^{j*}) \}^2 \mid \Data_{t-1} \right]  } \\
&= \frac{T^{-1} \sum_{t=1}^T W^j_t
  C^j_t / \{ \prod_{k=2}^K \pi_{t,k}^j(\Xbar_{k,t}) \}
  \pi_{t,1}^j(\bX_{1,t})}{\sigma } \\
&\inp \frac{\tau}{\sigma } \\
&>0.
\end{align*}
\noindent
Similarly, for the WAIPW estimator,
\begin{align*}
-\xiStarNegHalfnb \nabla_{\btheta} \calUThetaStarnb \xiStarNegHalfnb &= \frac{T^{-1} \sum_{t=1}^T W^{A,j}_t
  }{T^{-1}\sum_{t=1}^T E\left[ \{ M^{WAIPW,j}_t(\theta^j) \}^2 \mid \Data_{t-1} \right]  } \\
&= \frac{T^{-1} \sum_{t=1}^T W^{A,j}_t }{\sigma } \\
&\inp \frac{\tau' }{\sigma } \\
&>0
\end{align*}

Assumption (A8) states $\widehat{\theta}_T$ satisfies $\xiStarNegHalfnb \calU(\widehat{\theta}_T)\inp 0$ as $T\rightarrow \infty$.

Both $E\left\{ M_t^{WIPW,j}(\widehat{\theta}_t^{WIPW,j}) \right\}=0$ and $E\left\{ M_t^{WAIPW,j}(\widehat{\theta}_t^{WAIPW,j}) \right\}=0$.

For WIPW, due to the the strong law of large numbers of Martingale differences:
\begin{align*}
\xiStarNegHalfnb \calU(\widehat{\theta}_T) &= \frac{T^{-1} \mathcal{M}_t^{WIPW,j}(\widehat{\theta}_t^j)}{T^{-1}\sum_{t=1}^T E\left[ \left\{M^{WIPW,j}_t(\widehat{\theta}^{WIPW,j}_t)\right\}^2 \mid \Data_{t-1} \right]} \\
&\inp 0/\sigma
\end{align*}
\noindent
and likewise for WAIPW: 
\begin{align*}
\xiStarNegHalfnb \calU(\widehat{\theta}_T) &= \frac{T^{-1} \mathcal{M}_t^{WAIPW,j}(\widehat{\theta}_t^j)}{T^{-1}\sum_{t=1}^T E\left[ \left\{M^{WAIPW,j}_t(\widehat{\theta}^{WAIPW,j}_t)\right\}^2 \mid \Data_{t-1} \right]} \\
&\inp 0/\sigma
\end{align*}

%\label{lastpage}

\end{document}